\documentclass[journal,12pt,onecolumn,draftclsnofoot]{IEEEtran}
%


%
\usepackage{ifpdf}

%
\usepackage{cite}

%
\ifCLASSINFOpdf
  \usepackage[pdftex]{graphicx}
\else
\fi
%
%

%
\usepackage{amsmath}
\usepackage{amsfonts}
\usepackage{amssymb}
\usepackage{amsthm}
\usepackage{color}
\usepackage[usenames,dvipsnames,svgnames,table]{xcolor}
\usepackage{algorithmic}
\usepackage{algorithm}
\usepackage{array}
\usepackage{url}
\usepackage{float}
\usepackage{graphicx}

\newtheorem{theorem}{Theorem}
\newtheorem{definition}{Definition}
\newtheorem{example}{Example}
\newtheorem{proposition}{Proposition}
\newtheorem{corollary}{Corollary}
\newtheorem{lemma}{Lemma}
\newtheorem{remark}{Remark}
\newtheorem{protocol}{Protocol}

\newtheorem{conjecture}{Conjecture}
\newcommand{\cV}{\mathcal{V}}
\newcommand{\cB}{\mathcal{B}}

\DeclareMathOperator*{\argmin}{arg\,min}


\hyphenation{op-tical net-works semi-conduc-tor}

\begin{document}
%
\title{Communication-Efficient  Search for an Approximate Closest Lattice Point }

\author{Maiara~F.~Bollauf,
        Vinay~A.~Vaishampayan,
        and~Sueli~I.~R.~Costa.
\thanks{M. F. Bollauf is with the Institute of Mathematics, Statistics and Computer Science, University of Campinas as a Ph.D student (e-mail: maiarabollauf@ime.unicamp.br) and was a visiting scholar at  CUNY in 2016.}
\thanks{V. A. Vaishampayan is with Department of Engineering Science and Physics, City University of New York (CUNY) (e-mail: Vinay.Vaishampayan@csi.cuny.edu).}
\thanks{S. I. R. Costa is with the Institute of Mathematics, Statistics and Computer Science, University of Campinas (e-mail: sueli@ime.unicamp.br).}
\thanks{This paper was presented in part at the IEEE International Symposium on
Information Theory, Aachen, Germany, 2017 \cite{BVC:2017}.}}

\maketitle


\begin{abstract}
We consider the problem of finding the  closest lattice point to a vector in $n$-dimensional Euclidean space when each component of the vector is available at a distinct node in a network. Our objectives are (i)  minimize  the communication cost and (ii) obtain the error probability. The approximate closest lattice point considered here is the one obtained using the nearest-plane (Babai) algorithm.  Assuming a triangular special basis for the lattice, we develop  communication-efficient protocols for computing the approximate lattice point and determine the communication cost for lattices of dimension $n>1$. Based on available parameterizations of reduced bases, we determine   the error probability of the nearest plane algorithm for two dimensional lattices analytically, and present a computational error estimation algorithm in three dimensions. For dimensions 2 and 3, our results show that the error probability increases with the packing density of the lattice.  

\end{abstract}

{\small \textbf{\textit{Index terms}---Lattices, lattice quantization, distributed function computation, communication complexity.}}

\IEEEpeerreviewmaketitle

%

\section{Introduction}


	
	
	\begin{figure}[h]
\centering
\begin{minipage}{.35\linewidth}
  \includegraphics[width=\linewidth]{centralizedS.jpg}
  \caption{Centralized model}
  \label{fig-p1}
\end{minipage}
\hspace{.05\linewidth}
\begin{minipage}{.35\linewidth}
  \includegraphics[width=\linewidth]{interactiveS.jpg}
  \caption{Interactive model}
  \label{fig-p2}
\end{minipage}
\end{figure}

We consider a network consisting of $N$ sensor-processor nodes (hereafter referred to as nodes) and possibly a central computing node (fusion center) $F$ interconnected by links  with limited bandwidth. 
Node $i$ observes real-valued random variable $X_i$. In the  \textit{centralized model} (Fig. \ref{fig-p1}), the objective is to compute a given function $f(X_1,X_2,\ldots,X_n)$ at the fusion center based on information communicated from each of the $N$ sensor nodes. In the \textit{interactive model} (Fig. \ref{fig-p2}), the objective is to compute the function $f(X_1,X_2,\ldots,X_n)$ at each sensor node (the fusion center is absent). In general, since the random variables are real valued, these calculations would require that the system communicate an infinite number of bits in order to compute $f$ exactly. Since the network has finite bandwidth links, the information must be quantized in a suitable manner, but quantization affects the accuracy of the function that we are trying to compute. Thus, the main goal is to manage the tradeoff between communication cost and function computation accuracy. Here, $f$ computes the closest lattice point  to a real vector $x=(x_1,x_2,\ldots,x_n)$ in a given lattice $\Lambda$. 

	The process of finding the closest lattice point is widely used for decoding lattice codes and for quantization.  Lattice coding  offers significant coding gains for noisy channel communication~\cite{conwaysloane}  and for quantization~\cite{berger}. In a network, it may be necessary for a vector of measurements to be available at locations other than and possibly including nodes where the measurements are made. In order to reduce network bandwidth usage, it is logical to consider a vector quantized (VQ) representation of these measurements, subject to a fidelity criterion, for once a VQ representation is obtained, it can be forwarded in a bandwidth efficient manner to other parts of the network. However, there is a communication cost to obtaining the vector quantized representation. This paper is our attempt to understand the costs and tradeoffs involved. Bounds for the error probability in dimensions 2 and 3, when we considered the nearest plane (Babai) algorithm for lattices, are derived as well as the rate computation which underlies the decoding process in a distributed system. The communication cost/error tradeoff  of refining the nearest-plane estimate in an interactive setting is addressed in a companion paper~\cite{VB:2017}.

	Example application settings include MIMO systems~\cite{RCP:2009}, and network management in wide area networks~\cite{KCR:2006}, to name a few. For prior work in the computer science community, see~\cite{Yao:1979},~\cite{KN:1997}. Information theory~[\ref{Shannon:1948}] has resulted in tight bounds,~\cite{Orlitsky:2001},~\cite{MaIshwar}.  

	We observe here that algorithms for the closest lattice point problem have been studied in great detail, see~\cite{agrelletal} and the references therein, for a comprehensive survey and algorithms. However, in all these algorithms it is assumed that the vector components are available at the same location. In our work, the vector components are available at physically separated nodes and we are interested in the communication cost of exchanging this information in order to determine the closest lattice point. None of the previously proposed fast algorithms consider this communication cost.

	The closest lattice point problem has also been proposed as a basis for lattice cryptography (\cite{Ajtai:1996},\cite{MicGold:2012},\cite{galbraith},~\cite{mathcrypto},\cite{peikert}), a topic of great interest  in recent years, examples being the GGH and LWE cryptosystems. The idea is to require solution of the closest lattice point problem, which is known to be NP-hard~\cite{emdeboas}, assuring security.  
The Babai algorithm is used in some of the proposed cryptosystems, thus computation of its error probability is of interest in this context.



The remainder of our paper is organized into five sections:  Sec.~\ref{sec1} presents some basic definitions and facts about lattices.  Sec.~\ref{sec3} establishes a framework for measuring the cost and error rate and presents an expression for the error probability of the distributed closest lattice point problem in an arbitrary two dimensional case, Sec.~\ref{sec3d} develops a computational error analysis procedure for   lattices in dimension 3. Sec.~\ref{sec:babairate} presents efficient protocols for computing the approximate closest lattice point along with rate estimates for both models for $n>1$. 
Directions for future work and conclusions are in Sec.~\ref{secCC}. 

\section{Lattice Basics, Voronoi and Babai Partitions} 
\label{sec1} 
Notation, properties, partitions and special bases for lattices are described in this section. 

A lattice $\Lambda \subset \mathbb{R}^n$ is a set of integer linear combinations of independent vectors $v_1,v_2,\ldots,v_M \in \mathbb{R}^n,$ which can be written as $\Lambda=\{Vu,~u\in \mathbb{Z}^M\},$ where $V$ is a matrix whose columns are the vectors $v_{1}, \dots, v_{M}$ and vectors are considered here in the column format. 
$V$ is a \textit{generator matrix} of $\Lambda$  and $A=V^{T}V$ is the associated \textit{Gram matrix}. In this paper, we only consider full rank lattices ($n=M$).
	
A set $\mathcal{F}$ is called a \textit{fundamental region} of a lattice $\Lambda$ if all its translations by elements of $\Lambda$ define a partition of $\mathbb{R}^{n}$. Examples of fundamental regions are the fundamental parallelepiped supported by a set of basis vectors and the \textit{Voronoi region} or \textit{Voronoi cell} $\cV(\lambda)$ of a lattice point defined as $\cV(\lambda)=\{x \in \mathbb{R}^{n}: ||x-\lambda|| \leq ||x-\tilde{\lambda}||, \ \text{for all} \ \tilde{\lambda} \in \Lambda\},$ where $||.||$ denotes the Euclidean norm. Note that  $\cV(\lambda)$ is congruent to $\cV(0)$. 	

The \textit{volume} of a lattice $\Lambda$ is the volume of any of its fundamental regions. It is given by $vol(\Lambda)=|\det(V)|,$ where $V$ is a generator matrix of $\Lambda.$
	
A vector $v$ is called \textit{Voronoi vector} if the hyperplane $\{x \in \mathbb{R}^{n}: x \cdot v=\frac{1}{2} v \cdot v\}$ has a non-empty intersection with $\mathcal{V}(0).$ A Voronoi vector is said to be \textit{relevant} if this intersection is an $(n-1)-$dimensional face of $\cV(0).$  



The packing radius $\rho$ of a lattice $\Lambda$ is half of the minimum distance between lattice points and the packing density $\Delta(\Lambda)$ is the fraction of  space that is covered by balls $\mathcal{S}(\lambda, \rho)$ of radius $\rho$ in $\mathbb{R}^{n}$ centered at lattice points, i.e., $\Delta(\Lambda)  =  \dfrac{vol \ S(0,\rho)}{vol (\Lambda)}.$

	The \textit{closest vector problem} (CVP) in a lattice can be described as an integer least squares problem with the objective of determining $u^*,$ such that $u^\ast =  \argmin_{u \in \mathbb{Z}^{n}} \mid\mid x-Vu \mid\mid^{2},$ where the norm considered is the standard Euclidean norm. The closest lattice point to $x$ is then given by $x_{nl}=Vu^\ast$. Observe that the mapping $g_{nl}~:~\mathbb{R}^n \rightarrow \Lambda, \ \  x \mapsto x_{nl}$ partitions $\mathbb{R}^n$ into  Voronoi cells. 

The nearest plane (np) algorithm~\cite{babai}, an approach for approximating the closest lattice point,  computes $x_{np}$, an approximation to $x_{nl}$, given by $x_{np}=b_1 v_1 + b_2 v_2+\ldots+b_n v_n$, where $b_i \in \mathbb{Z}$ is obtained as follows.

	Let ${\mathcal S}_i$ denote the subspace spanned by the vectors $\{v_1,v_2,\ldots,v_i\}$, $i=1,2,\ldots,n$. Let ${\mathcal P}_i(z)$ be the orthogonal projection of $z$ onto ${\mathcal S}_i$ and let $v_{i,i-1}={\mathcal P}_{i-1}(v_i)$ be the closest vector to $v_i$ in ${\mathcal S}_{i-1}$. Consider the decomposition $v_i=v_{i,i-1}+v_{i,i-1}^{\perp}$ and let $z_i^\perp=z_{i}-{\mathcal P}_{i}(z_{i})$. Start with $z_n=x$ and $i=n$ and compute $b_i=\left[ \langle z_{i},v_{i,i-1}^\perp \rangle/\|v_{i,i-1}^\perp\|^2 \right]$, $z_{i-1}={\mathcal P}_{i-1}(z_i)-b_i v_{i,i-1}$, for $i=n,n-1,\ldots,1$ (here $[x]$ denotes the nearest integer to $x$).
	
 	We denote the vector $b=(b_{1}, b_{2}, \dots, b_{n})$ as \textit{Babai point}, which is an approximate solution to the closest lattice point problem. The mapping $g_{nl}~:~\mathbb{R}^n \rightarrow \Lambda, \ \ x \mapsto x_{np}$ partitions $\mathbb{R}^n$ into  hyper-rectangular cells with volume $|\det V|$ and we refer to this partition as a \emph{Babai partition}.

\begin{example} Fig.~\ref{fig-np} represents the Babai partition (black lines) and the Voronoi partition (pink lines) for the hexagonal lattice $A_2$ generated by $\{(1,0),(1/2, \sqrt{3}/2)\}$ and illustrates geometrically the manner in which the   np algorithm approximates the closest point problem. 

\begin{figure}[h!]
\begin{center}
		\includegraphics[height=3.5cm]{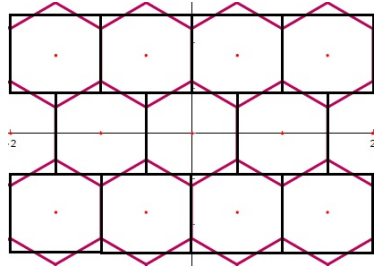}  
\caption{{Cells of the Babai partition and of the Voronoi partition for the hexagonal lattice $A_2$}}
 \label{fig-np}
\end{center}
\end{figure}
\end{example}
	Note that a Babai partition is basis dependent. In case the generator matrix $V$ is upper triangular with $(i,j)$ entry $v_{ij}$, each rectangular cell is axis-aligned and has sides of length  $|v_{11}|,|v_{22}|,\ldots,|v_{nn}|$. In this specific case, the vectors $v_{i,i-1}^{\perp}$ mentioned above are of type $(0, \dots, v_{ii},$ $0, \dots,0)$. 
	
	We remark that given a lattice $\Lambda$ with an arbitrary generator matrix $V$ we can apply the QR decomposition, $V=QR$ and the matrix $R$ will generate a rotation of the original lattice. We will introduce next two types of special bases we will work closely in this paper: Minkowski-reduced basis and obtuse superbase.
	
%

	 A basis $\{v_{1},v_{2},...,v_{n}\}$ of a lattice $\Lambda$
in $\mathbb{R}^{n}$ is said to be \textit{Minkowski-reduced}
if $v_{j},$ with $j=1,\dots,n,$ is such that $\left\Vert v_{j}\right\Vert \leq\left\Vert v\right\Vert $, for any $v$ for which $\{v_{1},...,v_{j-1},v\}$ can be extended
to a basis of $\Lambda$.


For two dimensional lattices, a Minkowski-reduced
basis is also called Lagrange-Gauss reduced basis and there
is a simple characterization~\cite{conwaysloane} for it: 
a lattice basis $\left\{ \ensuremath{v_{1},v_{2}}\right\} $
is a Minkowski-reduced basis if only if $\left\Vert v_{1}\right\Vert \leq\left\Vert v_{2}\right\Vert $
and $2\langle v_{1},v_{2}\rangle  \leq \left\Vert v_{1}\right\Vert ^{2}.$ 
It follows that the angle $\theta$ between
the subsequent minimum norm vectors $v_{1}$ and $v_{2}$ must satisfy $\text{ }\frac{\pi}{3} \leq\theta\leq\frac{2\pi}{3}.$


It is also possible to characterize a Minkowski-reduced basis for lattices in dimensions  three or smaller~\cite{conwaysloane} : 
\begin{proposition} \label{propmink} Consider a Gram matrix $A$ of a lattice $\Lambda$ and the conditions below:
\begin{eqnarray}
0 < a_{11}  & \leq & a_{22}  \leq a_{33}  \label{mink1} \\ 
2|a_{st}| & \leq & a_{ss} \ \ (s < t) \label{mink2} \\
2|\pm a_{rs} \pm a_{rt} \pm a_{st}| & \leq & a_{rr} + a_{ss} \ \ (r < s < t) \label{mink3}.
\end{eqnarray}
Then, inequalities (\ref{mink1}), (\ref{mink1})--(\ref{mink2}) and (\ref{mink1})--(\ref{mink3}) define a Minkowski-reduced basis for dimensions 1,2 and 3, respectively.
\end{proposition}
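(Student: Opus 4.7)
The plan is to establish necessity (a Minkowski-reduced basis satisfies the stated inequalities) and sufficiency (the inequalities imply Minkowski-reduction) separately, in each of the three dimensions.

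For necessity, the ordering in (\ref{mink1}) follows from a swap argument: since $\{v_1,\ldots,v_{j-1},v_{j+1}\}$ still extends to a basis of $\Lambda$ (just reinsert $v_j$ and the remaining $v_i$'s in appropriate positions), the minimality of $v_j$ forces $\|v_j\|\leq\|v_{j+1}\|$. Inequality (\ref{mink2}) is obtained by applying the basis-preserving substitution $v_t\mapsto v_t+\epsilon v_s$ with $\epsilon\in\{-1,+1\}$ for $s<t$; expanding $\|v_t+\epsilon v_s\|^2\geq\|v_t\|^2$ gives $2\epsilon a_{st}+a_{ss}\geq 0$, equivalent to $2|a_{st}|\leq a_{ss}$. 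Inequality (\ref{mink3}) follows analogously by considering $v_t\mapsto v_t+\epsilon_r v_r+\epsilon_s v_s$ for $r<s<t$ with $(\epsilon_r,\epsilon_s)\in\{\pm 1\}^2$; the four resulting inequalities are precisely those encoded by the absolute-value expression in (\ref{mink3}).

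For sufficiency, dimension $1$ is immediate. In dimension $2$, I would show that any nonzero $w=av_1+bv_2\in\Lambda$ satisfies $\|w\|^2\geq a_{11}$ by combining
\[
\|w\|^2=a^2a_{11}+2ab\,a_{12}+b^2a_{22}
\]
with $2|a_{12}|\leq a_{11}\leq a_{22}$ to obtain $\|w\|^2\geq(a^2-|ab|+b^2)a_{11}\geq a_{11}$, since $a^2-|ab|+b^2\geq 1$ for any $(a,b)\neq(0,0)$. To verify the second reduction condition, observe that $\{v_1,w\}$ is a basis precisely when $b=\pm 1$, and the inequality $\|w\|^2\geq a_{22}$ then reduces to $a^2a_{11}\pm 2a\,a_{12}\geq 0$, which follows directly from (\ref{mink2}) for $|a|=1$ and from quadratic dominance for $|a|\geq 2$.

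Dimension $3$ follows the same template with more bookkeeping. Given $w=av_1+bv_2+cv_3$, the plan is to show (a) $\|w\|^2\geq a_{11}$ when $w\neq 0$, (b) $\|w\|^2\geq a_{22}$ when $\{v_1,w\}$ extends to a basis of $\Lambda$ (equivalently $\gcd(b,c)=1$), and (c) $\|w\|^2\geq a_{33}$ when $\{v_1,v_2,w\}$ is itself a basis (forcing $c=\pm 1$). The decisive step is (c): with $c=1$ without loss of generality, the inequality to verify is
\[
a^2a_{11}+b^2a_{22}+2ab\,a_{12}+2a\,a_{13}+2b\,a_{23}\geq 0
\]
for all integers $a,b$. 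Cases $(a,b)=(\pm 1,0)$ and $(0,\pm 1)$ reduce directly to (\ref{mink2}); the cases $(a,b)\in\{\pm 1\}^2$ reduce to exactly the four distinct inequalities encoded in (\ref{mink3}); and for $\max(|a|,|b|)\geq 2$ the quadratic terms $a^2a_{11},b^2a_{22}$ dominate the cross terms after bounding them via (\ref{mink2}). The main obstacle is carrying out step (c) systematically, ensuring that the finite list of small-integer cases covered by (\ref{mink2})--(\ref{mink3}) really does control the full integer lattice of coefficients; a subsidiary difficulty is the preliminary check in step (b) that $\gcd(b,c)=1$ is indeed the correct primitivity condition for $\{v_1,w\}$ to extend to a basis, which then bootstraps the same kind of quadratic-form bounding argument using (\ref{mink3}).
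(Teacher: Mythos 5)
The paper itself offers no proof of Proposition~\ref{propmink}: it is quoted as a known characterization from \cite{conwaysloane} (the conditions go back to Minkowski), so there is no internal argument to compare yours against. Judged on its own terms, your outline is the standard proof and its completed parts are correct. The necessity half is fine: the substitutions $v_t\mapsto v_t+\epsilon v_s$ and $v_t\mapsto v_t+\epsilon_r v_r+\epsilon_s v_s$ are unimodular, so the modified vector still extends $\{v_1,\dots,v_{t-1}\}$ to a basis, and the four choices of $(\epsilon_r,\epsilon_s)$ exhaust the sign patterns of $\pm a_{rs}\pm a_{rt}\pm a_{st}$ up to a global sign, which is all that the absolute value in (\ref{mink3}) requires. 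The two-dimensional sufficiency argument is complete ($a^2-|ab|+b^2\geq 1$ for nonzero integer pairs is the minimum of the hexagonal form), and your primitivity conditions in dimension three are the right ones: $\{v_1,w\}$ extends to a basis exactly when the gcd of the $2\times 2$ minors of the coefficient matrix, namely $\gcd(b,c)$, equals $1$, and $\{v_1,v_2,w\}$ is a basis exactly when $c=\pm 1$.

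The only place you stop short is the ``quadratic dominance'' bookkeeping in dimension three, and while you are right to flag it as the delicate step, it does close. For step (c), bounding the cross terms via (\ref{mink2}) gives $Q(a,b)\geq (x^2-xy-x)a_{11}+(y^2-y)a_{22}$ with $x=|a|$, $y=|b|$; replacing $a_{11}$ by $a_{22}$ wherever its coefficient is negative reduces the claim to $x^2-xy+y^2-x-y\geq 0$, i.e.\ $(x-y)^2+(x-1)^2+(y-1)^2\geq 2$, which fails only at $(x,y)=(1,1)$ --- precisely the case handled by (\ref{mink3}). Step (b) likewise needs (\ref{mink3}) only for $(a,b,c)=(\pm1,\pm1,\pm1)$, where the crude bound yields $a_{33}-a_{11}$ instead of the required $a_{33}$; every other coprime pair $(b,c)$ is controlled by (\ref{mink1})--(\ref{mink2}) together with the parity observation that $b$ and $c$ cannot both be even. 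Note that several of these inequalities are achieved with equality (e.g.\ at $(|a|,|b|)\in\{(1,2),(2,1),(2,2)\}$ in step (c)), so the estimates have no slack to give away; a fully written-out proof must carry the exact constants rather than rounding them.
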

All lattices have a Minkowski-reduced basis, which  roughly speaking, consists of short vectors that are as  perpendicular as possible~\cite{conwaysloane}. 
	
We describe next the concept of an obtuse superbase that will be used in the following sections.
%


\begin{definition}
Let $\{v_1,v_{2}, \dots, v_{n}\}$ be a basis for a lattice $\Lambda$. A {\bf superbase}   $\{v_{0}, v_{1}, \dots, v_{n}\}$ with $v_0= -\sum_{i=1}^{n} v_{i},$  is said to be {\bf obtuse} if $p_{ij}=v_{i} \cdot v_{j} \leq 0,$ for $i,j=0,\dots,n, \ \ i \neq j$. A lattice $\Lambda$ is said to be of {\bf Voronoi's first kind} if it has an \textit{obtuse superbase}
\end{definition}
The above parameters $p_{ij}$ are called \textit{Selling parameters} and if $p_{ij}<0$ we say that the superbase is strictly obtuse.


\begin{example} \label{exos} Consider the standard basis $\{v_{1},v_{2},v_{3}\}$ for the body-centered cubic (BCC) lattice where $v_{1}=(1,1,-1), v_2=(1,-1,1), v_3=(-1,1,1).$ We set $v_0=-v_1-v_2-v_3=(-1,-1,-1)$ and it is not hard to see that $v_0,v_1,v_2,v_3$ is a strictly obtuse superbase for BCC lattice. Indeed $v_0+ v_1 + v_2 + v_3=0$ and $p_{ij}= -1 < 0 \ \forall i,j=0,1,2,3, i \neq j.$ Thus, BCC is of Voronoi's first kind.
\end{example}

The existence of an obtuse superbase allows a characterization of the relevant Voronoi vectors for a lattice.

\begin{theorem} \cite[Th.3, Sec. 2]{ConwaySloane:1992} Let $\Lambda$ be a lattice of Voronoi's first kind with obtuse superbase $v_0,v_1, \dots, v_n$. 
Vectors of the form $\sum_{i \in S} v_{i},$ where $S$ is a strict non-empty subset of $\{0,1,\dots, n\}$ are Voronoi vectors of $\Lambda$.
\end{theorem}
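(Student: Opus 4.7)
The plan is to use the following standard characterization: a nonzero lattice vector $u \in \Lambda$ is a Voronoi vector if and only if $u$ is a shortest element of its coset $u+2\Lambda$. This follows from the identity $\|u/2 - \lambda\|^2 = \tfrac{1}{4}\|u-2\lambda\|^2$, which tells us that the midpoint $u/2$ lies in $\mathcal{V}(0)$ precisely when no translate $u-2\lambda$, $\lambda\in\Lambda$, is shorter than $u$. So the whole task reduces to showing that $u=\sum_{i\in S} v_i$ minimizes the Euclidean norm over its coset modulo $2\Lambda$.

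The key algebraic ingredient is \emph{Selling's formula} for the squared norm in the obtuse superbase. Any lattice vector may be written (non-uniquely) as $w=\sum_{i=0}^{n}a_i v_i$ with $a_i\in\mathbb{Z}$; the ambiguity is a simultaneous shift $(a_i)\mapsto(a_i+c)$, which leaves $w$ unchanged because $\sum_i v_i=0$. Using $v_i\cdot v_j=p_{ij}$ together with the consequence $p_{ii}=-\sum_{j\neq i} p_{ij}$ of $\sum_j v_j=0$, a direct expansion of $\|w\|^2=\sum_{i,j} a_i a_j p_{ij}$ rearranges to
\[
\|w\|^2=\sum_{0\le i<j\le n}(a_i-a_j)^2(-p_{ij}).
\]
Since the superbase is obtuse, every coefficient $-p_{ij}$ is nonnegative, so $\|w\|^2$ is a nonnegative combination of the terms $(a_i-a_j)^2$.

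Now identify the coset $u+2\Lambda$ in this representation. Writing $u=\sum_{i\in S} v_i$ as $a_i=\mathbf{1}[i\in S]$, any $w\in u+2\Lambda$ admits a representation $(a_i)$ whose parity vector agrees with the indicator of $S$ (the global shift ambiguity may flip all parities, giving the indicator of $S^c$, but this yields the same set of differences $a_i-a_j$). Thus for $i<j$ on the same side of the partition $\{S,S^c\}$, $a_i-a_j$ is even and $(a_i-a_j)^2\ge 0$; for $i<j$ on opposite sides, $a_i-a_j$ is odd and $(a_i-a_j)^2\ge 1$. Both lower bounds are attained simultaneously by the choice $a_i=1$ for $i\in S$, $a_i=0$ for $i\notin S$, namely by $u$ itself. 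Hence $\|u\|^2\le\|w\|^2$ for every $w\in u+2\Lambda$, so $u$ is a shortest vector in its coset and therefore a Voronoi vector, as claimed.

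The main obstacle is the derivation of Selling's formula: once the squared norm is rewritten as a nonnegative combination of the $(a_i-a_j)^2$, the obtuseness hypothesis $p_{ij}\le 0$ converts the coset-minimization problem into a term-by-term minimization, and the rest is a parity bookkeeping argument. Everything else (the coset characterization of Voronoi vectors, the parity constraint) is routine.
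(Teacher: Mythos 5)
Your proof is correct. The paper does not prove this statement---it is quoted from Conway and Sloane (1992) as a known result---and your argument (reduce ``Voronoi vector'' to ``shortest in its coset mod $2\Lambda$'' via the midpoint identity, expand the norm by Selling's formula $\|\sum_i a_i v_i\|^2=\sum_{i<j}(a_i-a_j)^2(-p_{ij})$ using $\sum_i v_i=0$, and minimize term by term under the parity constraint imposed by the coset) is precisely the standard proof given in that reference, so there is nothing to compare beyond noting the match.
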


	It was demonstrated \cite{ConwaySloane:1992} that all lattices with dimension less or equal than three are Voronoi's first kind. In three dimensions, considering an obtuse superbase, since $v_0=-v_1-v_2-v_3,$ all Voronoi vectors described in the above theorem can be  written as one of the following seven vectors or their opposites \cite{ConwaySloane:1992}: 
\begin{equation}
v_{1},v_{2},v_{3},v_{12}=v_{1}+v_{2}, v_{13}=v_{1}+v_{3}, v_{23}=v_{2}+v_{3}, v_{123}=v_{1}+v_{2}+v_{3}
\end{equation}

	Given an obtuse superbase, we also characterize the norms $N(v_{1}),N(v_{2}),N(v_{3}),N(v_{12}),N(v_{13}),$ $N(v_{23}), N(v_{123}),$ where $N(x)=x \cdot x,$ as \textit{vonorms} and $p_{ij}=-v_{i} \cdot v_{j} (0 \leq i < j \leq 3)$ as \textit{conorms}, for the superbase $v_0,v_1,v_2,v_3.$ Precise definitions of conorms and vonorms for the general $n-$dimensional case can be found in \cite{ConwaySloane:1992}.

%
	The nonzero cosets of $\Lambda/2\Lambda$ naturally form a discrete projective plane of order $2.$ The vonorms 
are marked as the nodes of the projective plane and the corresponding conorms $0$ and $p_{ij}$ at the nodes of the dual plane in the following Figure \ref{gpp}.
	
\begin{figure}[H]
\begin{center}
		\includegraphics[height=4.5cm]{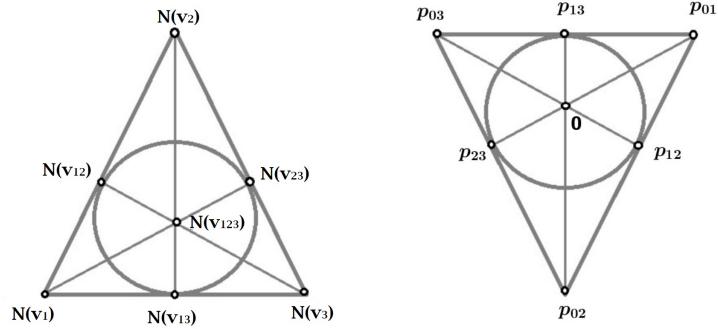}  
\caption{{Projective and dual planes labelled with vonorms and conorms respectively (based on \cite{ConwaySloane:1992}, p. 61)}}
 \label{gpp}
\end{center}
\end{figure}	

\begin{remark}\cite{ConwaySloane:1992} Two projective planes labeled with conorms represent the same lattice precisely when there is a conorm preserving collineation between them.
\end{remark}

%
%
%
	
	The existence of obtuse superbases for three dimensional lattices let us characterize the five parallelohedra that can be their Voronoi regions: truncated octahedron, hexa-rhombic dodecahedron, rhombic dodecahedron, hexagonal prism and cuboid.
	
	Let $\Lambda$ be an arbitrary $3-$dimensional lattice, with obtuse superbase $v_0,v_1,v_2,v_3$ and conorms $p_{i,j}.$ A vector $t \in \mathbb{R}^{3}$ can be specified by its inner products
\begin{equation} \label{eqr3}
(t \cdot v_1,t \cdot v_2,t \cdot v_3)=(y_1,y_2,y_3)=y,
\end{equation} 
since the determinant of the Gram matrix $A$ is non vanishing. 

	The most generic Voronoi region in three dimensions is the truncated octahedron, with 14 faces and 24 vertices. 
It is know \cite{ConwaySloane:1992} that the vertices of this Voronoi cell are all the 24 points $p_{ijkl}$ where $\{i,j,k,l\}$ is any permutation of $\{0,1,2,3\}$:
\begin{eqnarray} \label{eqcoor}
y_{i}=\small{\frac{1}{2}}(p_{ij}+p_{ik}+p_{il}), \ \ \ y_{j}=\small{\frac{1}{2}}(-p_{ji}+p_{jk}+p_{jl}), \nonumber \\
y_{k}=\small{\frac{1}{2}}(-p_{ki}-p_{kj}+p_{kl}), \ \ \ y_{l}=\small{\frac{1}{2}}(-p_{li}-p_{lj}-p_{lk}). 
\end{eqnarray}
 
	Using Equations (\ref{eqr3}) and (\ref{eqcoor}) one can define all the points that generates a generic Voronoi region. 
	
	It is possible to guarantee that two lattices for which the correspondent conorms are zero have combinatorially equivalent Voronoi regions (since one can be continuously deformed into the other without any edges being lost). So, when we construct the dual projective planes to represent the conorms, there are five choices for zeros: one, two, three collinear zeros, three non-collinear zeros or four zeros. Each of these configuration produces a different Voronoi cell according to Figure \ref{5types}.
	
\begin{figure}[H]
\begin{center}
		\includegraphics[height=5.0cm]{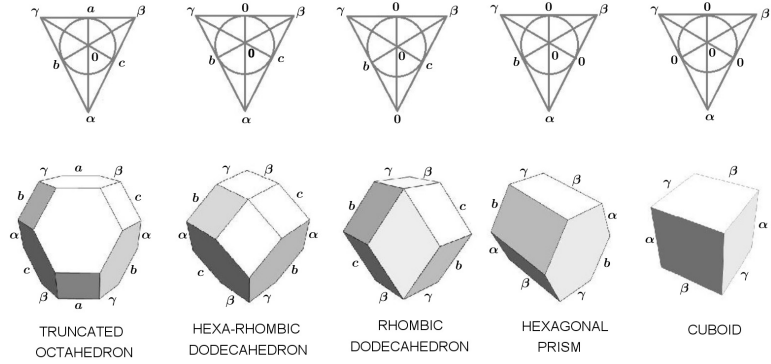}  
\caption{{Dual plane labeled with conorms and its correspondent Voronoi cells (based on \cite{ConwaySloane:1992}, p. 65). (\emph{Note: Authors have requested permission to reproduce from the publisher and will present only if permission is granted.})}}
 \label{5types}
\end{center}
\end{figure}


\begin{theorem} \label{minkos}
In dimensions $n=1,2,3$, if a lattice $\Lambda \subset \mathbb{R}^n$ has a  Minkowski-reduced basis with vectors $\{v_1,\ldots,v_n\}$, with $v_i . v_j \leq 0$, $i\neq j$, then the superbase $\{v_0,v_1,\ldots,v_n\}$, with $v_0=-\sum_{i=1}^n v_i$ is an obtuse superbase for $\Lambda$. Conversely, if $\Lambda$ has an obtuse superbase, then a Minkowski-reduced basis can be obtained from it.
\end{theorem}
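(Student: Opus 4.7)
The argument splits into the two implications. For the forward direction (Minkowski-reduced basis $\Rightarrow$ obtuse superbase), set $a_{ij} = v_i \cdot v_j$. By hypothesis $a_{ij} \leq 0$ whenever $1 \leq i < j \leq n$, so the only inner products still needing verification are those involving $v_0$. Since $v_0 \cdot v_k = -a_{kk} - \sum_{i \neq k} a_{ik}$, the claim reduces to showing $a_{kk} + \sum_{i \neq k} a_{ik} \geq 0$ for each $k = 1,\ldots,n$. For $n=1$ this is trivial; for $n=2,3$ I would combine the Minkowski bound $2|a_{ij}| \leq a_{ii}$ from Proposition \ref{propmink} with the ordering $a_{11} \leq a_{22} \leq a_{33}$. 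A representative estimate for $n=3$ and $k=3$ is $a_{13} + a_{23} + a_{33} \geq -a_{11}/2 - a_{22}/2 + a_{22} = (a_{22}-a_{11})/2 \geq 0$, and analogous bounds cover $k=1,2$.

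For the converse (obtuse superbase $\Rightarrow$ Minkowski-reduced basis), I would relabel the superbase $\{v_0, v_1, \ldots, v_n\}$ so that $\|v_0\| \leq \|v_1\| \leq \cdots \leq \|v_n\|$ and then take the $n$ shortest vectors as the candidate basis. Because $v_0 + v_1 + \cdots + v_n = 0$, any $n$ of the $n+1$ superbase vectors span $\Lambda$, and the selected vectors inherit pairwise obtuseness from the hypothesis. Using the identity $\|v_i\|^2 = \sum_{k \neq i} p_{ik}$, where $p_{ij} = -v_i \cdot v_j \geq 0$ are the conorms, the norm ordering translates into inequalities among the $p_{ij}$. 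These, combined with $p_{ij} \geq 0$, yield conditions (\ref{mink1}) and (\ref{mink2}) of Proposition \ref{propmink} after short algebraic manipulations.

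The main obstacle will be verifying the triple condition (\ref{mink3}) in dimension three, since it involves three basis vectors at once and requires a careful choice of signs in $\pm a_{rs} \pm a_{rt} \pm a_{st}$. A workable strategy is to handle the generic case (all $p_{ij} > 0$) first, reducing the eight sign choices to two essentially distinct inequalities by exploiting the sign pattern $a_{ij} \leq 0$; then one sweeps through the degenerate zero-patterns of the conorm array, which are precisely the five combinatorial types displayed in Figure \ref{5types}. Because this list of patterns is finite and the generic estimate degenerates continuously into each of them, the case analysis, while tedious, stays manageable.
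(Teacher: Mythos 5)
Your forward direction is correct and is essentially the paper's own argument: both write $v_0\cdot v_k=-(a_{kk}+\sum_{i\neq k}a_{ik})$ and bound the off-diagonal terms by $2|a_{ik}|\leq \min(a_{ii},a_{kk})$ together with the ordering $a_{11}\leq a_{22}\leq a_{33}$, exactly as in your $k=3$ estimate.

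The converse, however, contains a genuine gap at the step ``these, combined with $p_{ij}\geq 0$, yield conditions (\ref{mink1}) and (\ref{mink2}) after short algebraic manipulations.'' Using $N(v_i)=\sum_{k\neq i}p_{ik}$, condition (\ref{mink2}) for the pair $\{v_s,v_t\}$ reads $p_{st}\leq p_{su}+p_{sw}$ ($u,w$ the remaining indices), which is equivalent to $N(v_t)\leq N(v_s+v_t)$. In dimension $2$ this is literally $N(v_2)\leq N(v_0)$ and follows from the ordering; in dimension $3$ the vector $v_s+v_t=-(v_u+v_w)$ is \emph{not} a superbase vector, and the ordering of the four superbase norms does not control the pair-vonorms. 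Concretely, take the obtuse superbase with Gram matrix
\begin{equation*}
A=\begin{pmatrix} 1 & -1 & 0\\ -1 & 3 & -1\\ 0 & -1 & 3\end{pmatrix},
\end{equation*}
so that $(p_{01},p_{02},p_{03},p_{12},p_{13},p_{23})=(0,1,2,1,0,1)$ and $N(v_1)=1\leq N(v_2)=N(v_3)=N(v_0)=3$: all conorms are nonnegative and the norm ordering holds, yet $2|a_{12}|=2>1=a_{11}$, so $\{v_1,v_2,v_3\}$ violates (\ref{mink2}). Worse, the second successive minimum of this lattice is $N(v_1+v_2)=2$, attained only at $\pm(v_1+v_2)$, so \emph{no} choice of three of the four superbase vectors is Minkowski-reduced; the reduced basis must involve a sum such as $v_1+v_2$. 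You are in good company --- the paper's own proof asserts $v_2\cdot v_2\leq (v_1+v_2)\cdot(v_1+v_2)$ at the same point without justification, and the example above shows that assertion can fail --- but the step does not go through, and any repair must either restrict the class of superbases or allow the reduced basis to be built from sums of superbase vectors rather than the superbase vectors themselves. Separately, your treatment of condition (\ref{mink3}) is only a plan (a finite sweep over sign choices and conorm zero-patterns), not yet an argument, so even granting (\ref{mink2}) the converse would remain incomplete.
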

\begin{proof} The case \fbox{\begin{minipage}{1.5em}
n=1 
\end{minipage}} 
is trivial, hence we will start with 
%
\fbox{\begin{minipage}{1.5em}
n=2: 
\end{minipage}} $(\Rightarrow)$ Suppose that $\{v_{1},v_{2}\}$ is a Minkowski-reduced basis, then, according to Proposition \ref{propmink}, $0< v_{1}\cdot v_1 \leq v_2 \cdot v_2$ and $2|v_1 \cdot v_2| \leq v_{1} \cdot v_1.$ Moreover, by hypothesis, $v_1 \cdot v_2 \leq 0.$ Define $v_0=-v_1-v_2$ and to guarantee that $\{v_0,v_1,v_2\}$ is an obtuse superbase, we need to check that $p_{01} \leq 0$ and $p_{02} \leq 0.$ Indeed,
\begin{eqnarray}
p_{01}= v_0 \cdot v_1 = (-v_1 -v_2) \cdot v_1 = -v_1 \cdot v_1 \underbrace{-v_1 \cdot v_2}_{|v_1 \cdot v_2|} \leq -2|v_1 \cdot v_2| + |v_1 \cdot v_2| \leq 0.
\end{eqnarray}
	Similarly we have that $p_{02} \leq 0.$

$(\Leftarrow)$ If $\{v_0,v_1,v_2\}$ is an obtuse superbase, any permutation of it is also an obtuse superbase. So, we may consider one such that $|v_1| \leq |v_2| \leq |v_0|.$ Then we have that $0 < v_1 \cdot v_1 \leq v_{2} \cdot v_{2} \leq (v_1+v_2) \cdot (v_1+v_2)$ and $v_{1} \neq 0.$



	From the last inequality, we have that
\begin{eqnarray}
-2v_{1} \cdot v_{2} & \leq  & v_{1} \cdot v_{1} \Rightarrow 2 |v_{1} \cdot v_{2}| \leq  v_{1} \cdot v_{1}.  
\end{eqnarray}

For
\fbox{\begin{minipage}{1.5em}
n=3: 
\end{minipage}} $(\Rightarrow)$ Consider a Minkowski-reduced basis $\{v_1,v_{2},v_{3}\}$ such that $v_{1}\cdot v_{2} \leq 0, v_{1} \cdot v_{3} \leq 0$ and $v_{2} \cdot v_{3} \leq 0.$ To check if $\{v_{0},v_1,v_2,v_{3}\}$ is an obtuse superbase, we need to verify that $p_{01} \leq 0, p_{02} \leq 0$ and $p_{03} \leq 0.$ 
	
	Observe that
\begin{equation}
p_{01}= v_0 \cdot v_{1} = -v_1 \cdot v_1 \underbrace{-v_1 \cdot v_2}_{|v_{1} \cdot v_{2}|} \underbrace{-v_1 \cdot v_3}_{|v_1 \cdot v_3|} \leq  -v_1 \cdot v_1 +  \frac{v_1 \cdot v_1}{2} + \frac{v_1 \cdot v_1}{2} \leq 0.
\end{equation}	

%

	With analogous arguments, we show that $p_{02} \leq 0$ and $p_{03} \leq 0.$

	
$(\Leftarrow)$ To prove the converse, up to a permutation, we may consider an obtuse superbase such that $|v_{1}| \leq |v_2| \leq |v_3| \leq |v_{0}|.$ This basis will be Minkowski-reduced if we prove conditions ($\ref{mink2}$) and ($\ref{mink3}$) from Proposition \ref{propmink}, i.e.,

\begin{equation} \label{ine1}
2|v_1 \cdot v_2| \leq v_1 \cdot v_1; \ \  2|v_1 \cdot v_3| \leq v_1 \cdot v_1; \ \  2|v_2 \cdot v_3| \leq v_2 \cdot v_2 
\end{equation}
and
\begin{equation} \label{ine2}
2|\pm v_1 \cdot v_2 \pm v_{1} \cdot v_{3} \pm v_{2} \cdot v_{3}| \leq v_1 \cdot v_1+ v_2 \cdot v_2.
\end{equation}

	The inequalities in Equation (\ref{ine1}) are shown similarly to the two dimensional case starting from $v_{2} \cdot v_{2} \leq (v_1+v_2) \cdot (v_1+v_2),$ $v_{3} \cdot v_{3} \leq (v_1+v_3) \cdot (v_1+v_3)$ and $v_{3} \cdot v_{3} \leq (v_2+v_3) \cdot (v_2+v_3).$ Starting from $v_{3} \cdot v_{3} \leq (v_1 + v_2 +v_3) \cdot (v_1 + v_2 + v_3),$ it follows the inequality in Equation (\ref{ine2}) concluding the proof.  

%
\end{proof}

\section{Error Probability Analysis for Arbitrary Two Dimensional Lattices}
\label{sec3}

	We  assume that node $i$ observes an independent identically distributed (iid) random process $\{X_i(t), t \in \mathbb{Z}\}$, where $t$ is the time index and that random processes observed at distinct nodes are mutually independent. The time index $t$ is suppressed in the sequel. The random vector $X=(X_1,X_2)$ is obtained by projecting a random process on the basis vectors of an underlying coordinate frame, which is assumed to be fixed.

	Consider that the lattice $\Lambda$ is generated by the scaled generator matrix $\alpha V$, where $V$ is the generator matrix of the unscaled lattice. 
Let ${\mathcal V}(\lambda)$ and ${\mathcal B}(\lambda)$ denote the Voronoi and Babai cells, respectively, associated with lattice vector $\lambda \in \Lambda$. The error probability $P_e(\alpha)$, is the probability of the event $\{\lambda_{nl}(X)\neq \lambda_{np}(X)\}$ and $P_e:=\lim_{\alpha \rightarrow 0} P_e(\alpha) = {area(\cB(0)\bigcap \cV(0)^c)}/{area(\cB(0))}$. 

	As will be discussed further, the Babai partition is dependent on, and the Voronoi partition is invariant to, the choice of lattice basis. Thus the error probability depends on the choice of the lattice basis. We will assume here that a Minkowski-reduced lattice basis, which is also obtuse (Theorem \ref{minkos}) can be chosen by the designer of the lattice code and it can be transformed into an equivalent basis $\{(1,0),(a,b)\}.$ This can be accomplished by applying QR decomposition to the lattice generator matrix (which has the original chosen basis vectors on its columns) in addition to convenient scalar factor. The reason for working with a Minkowski-reduced basis is partly justified by Example~\ref{ex1} below and the fact that the Voronoi region is easily determined since the relevant vectors are known; see Lemma~\ref{lemmathird} below.




An example to demonstrate the dependence of the error probability on the lattice basis is now presented.
\begin{example} \label{ex1} Consider a lattice
$\Lambda\subset\mathbb{R}^{2}$ with basis $\{(5,0),(3,1)\}.$ The error probability in this case is $P_{e}=0.6$ (Fig. \ref{triangbasis}), whereas if we
start from the basis $\{(1,2),(-2,1)\},$ we achieve after the QR decomposition $\left\{ (\sqrt{5},0),(0,\sqrt{5})\right\}$ and $P_{e}=0,$ since the Babai region associated
with an orthogonal basis and the Voronoi region for rectangular lattices coincides.

\begin{figure}[h!]
\begin{center}
		\includegraphics[height=3.0cm]{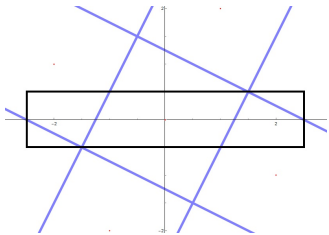}  
\caption{{Voronoi region and Babai partition of the triangular basis $\{(5,0),(3,1)\}$ }}
\label{triangbasis}
\end{center}
\end{figure}
\end{example}
Example \ref{ex1} illustrates the importance of  working with a good basis and partially explains our choice to work with a Minkowski-reduced basis. 
As mentioned above, additional motivation come from the observation that for a Minkowski-reduced basis in two dimensions, the relevant vectors are known.	

%


%

To see this, we first note that an equivalent condition for  a basis $\{v_{1}, v_{2}\}$ to be Minkowski-reduced in dimension
two is $||v_{1}||\leq||v_{2}||\leq||v_{1}\pm v_{2}||.$ Thus, we can state the following result, which was derived from the two dimensional analysis proposed in \cite{ConwaySloane:1992}.

\begin{lemma} \label{lemmathird} If a Minkowski-reduced basis is given by $\{(1,0),(a,b)\}$ then, besides the basis vectors, a third relevant vector is
\begin{equation}
\begin{cases}
(-1+a,b), & \text{if } \frac{\pi}{3} \leq \theta \leq \frac{\pi}{2} \\
(1+a,b), & \text{if } \frac{\pi}{2} < \theta \leq \frac{2\pi}{3},
\end{cases}
\end{equation}
where $\theta$ is the angle between $(1,0)$ and $(a,b).$
\end{lemma}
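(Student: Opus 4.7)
The plan is to invoke the standard characterization of relevant Voronoi vectors in terms of the cosets of $\Lambda/2\Lambda$, specialized to dimension two. Recall that $u\in\Lambda\setminus\{0\}$ is a relevant Voronoi vector of $\cV(0)$ if and only if $\pm u$ are strictly shortest among all representatives of the coset $u+2\Lambda$. In dimension two, the group $\Lambda/2\Lambda$ has exactly three nontrivial cosets, which can be represented by $v_1$, $v_2$, and $v_1+v_2$. Since $v_1+v_2$ and $v_1-v_2$ differ by $2v_2$, they lie in the same coset, so the third pair of relevant vectors is determined by whichever of $v_1+v_2$ or $v_1-v_2$ has smaller norm (with both being relevant in the degenerate tie case).

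With this in place, the first step would be to identify which vector of each coset is the shortest. Using the equivalent Minkowski-reduction condition $\|v_1\|\le\|v_2\|\le\|v_1\pm v_2\|$ recalled just before the statement, the shortest representatives of the first two cosets are $\pm v_1$ and $\pm v_2$. For the third coset I would simply compute
\[
\|v_1+v_2\|^{2}-\|v_1-v_2\|^{2}=4\langle v_1,v_2\rangle,
\]
so the shorter of the two candidates is $v_1-v_2$ when $\langle v_1,v_2\rangle>0$ and $v_1+v_2$ when $\langle v_1,v_2\rangle<0$.

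The second step is to translate the sign of $\langle v_1,v_2\rangle$ into a statement about $\theta$. With $v_1=(1,0)$ and $v_2=(a,b)$, one has $\langle v_1,v_2\rangle=a=\|v_2\|\cos\theta$, hence the sign of the inner product agrees with the sign of $\cos\theta$. Combined with $v_2-v_1=(-1+a,b)$ and $v_1+v_2=(1+a,b)$, this yields exactly the two cases stated in the lemma: the third relevant vector is $(-1+a,b)$ for $\pi/3\le\theta\le\pi/2$ and $(1+a,b)$ for $\pi/2<\theta\le 2\pi/3$. The Minkowski condition already forces $\theta\in[\pi/3,2\pi/3]$, so these two cases exhaust all possibilities.

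The only subtlety, and the place where one has to be slightly careful, is the boundary $\theta=\pi/2$, where $\|v_1+v_2\|=\|v_1-v_2\|$ and both vectors become relevant; the lemma resolves this by convention by listing $(-1+a,b)$ in that case. I expect the main minor obstacle to be just stating the coset characterization cleanly (so that the three-pair count is rigorous and the tie case is handled), since the algebra reducing to the sign of $\langle v_1,v_2\rangle$ is immediate.
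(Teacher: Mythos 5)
Your approach is correct in substance and is, at bottom, the same Conway--Sloane machinery the paper invokes, but it is packaged differently. The paper offers no explicit proof of the lemma: it cites the two-dimensional analysis of \cite{ConwaySloane:1992} and, in the surrounding text, reaches the same conclusion by exhibiting the obtuse superbase $\{-v_1-v_2,\,v_1,\,v_2\}$ (after replacing $v_1$ by $-v_1$ when $\theta<\pi/2$) and reading off the relevant vectors as $\pm v_1,\pm v_2,\pm(v_1+v_2)$ from the superbase theorem. Your route through the coset characterization of $\Lambda/2\Lambda$ is more self-contained: it simultaneously bounds the number of relevant pairs by three and identifies the third one by the sign of $\langle v_1,v_2\rangle=a$, whereas the superbase route only asserts that the listed vectors are Voronoi vectors and must borrow relevance from elsewhere. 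Two points need repair or justification. First, your parenthetical ``with both being relevant in the degenerate tie case'' contradicts the characterization you just stated: when $\|v_1+v_2\|=\|v_1-v_2\|$ (i.e.\ $a=0$, $\theta=\pi/2$) the coset has four shortest vectors, so \emph{neither} is relevant --- the lattice is rectangular, its Voronoi cell is a rectangle with only four relevant vectors, and the bisector of $(-1,b)$ meets $\cV(0)$ only in a vertex; the lemma's inclusion of this boundary case is itself only a harmless convention for the subsequent area computation. Second, you implicitly assume the shortest members of the coset $(v_1+v_2)+2\Lambda$ lie among $\pm(v_1\pm v_2)$; this is true but should be checked (e.g.\ $\|(2m+1)v_1+(2n+1)v_2\|^2\ge s^2-st+t^2$ with $s=|2m+1|$, $t=|2n+1|$, using $2|a|\le\|v_1\|^2\le\|v_2\|^2$, which exceeds $\min\|v_1\pm v_2\|^2$ unless $s=t=1$), and the analogous uniqueness claims for the cosets of $v_1$ and $v_2$ deserve the same one-line verification.
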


Note that, if $\{{v\ensuremath{_{1},v_{2}}}\}$ is a Minkowski basis
then so is $\{{-v\ensuremath{_{1},v_{2}}}\}$ and hence any lattice
has a Minkowski basis with $\frac{\pi}{2}\leq\theta\leq\frac{2\pi}{3}$.
So, if we consider the Minkowski-reduced basis as $\{(1,0),(a,b)\},$ with $a^{2}+b^{2} \geq 1$ and $-\frac{1}{2} \leq a \leq 0,$ it is possible to use Lemma \ref{lemmathird} to describe the Voronoi region of $\Lambda$ and determine its intersection with the associated Babai partition. Observe that the area of both regions must be the same and in this specific case, equal to $b.$ This means that the vertices that define the Babai rectangular partition are $\left(\pm \frac{1}{2}, \pm \frac{b}{2}\right).$ 

	In addition $\{(-1-a,-b),(1,0),(a,b)\}$ is an obtuse superbase for $\Lambda,$ so the relevant vectors that defines the Voronoi region are $\pm (1,0), \pm (a,b)$ and $\pm (-1-a,-b).$ We will choose for the analysis proposed in Theorem \ref{thmmain} only the relevant vectors in the first quadrant, i.e., $(1,0),(1+a,b), (a,b),$ due to the symmetry that a Voronoi cell has. Therefore, we can state the following result

\begin{theorem} \label{thmmain} Consider a lattice $\Lambda \subset\mathbb{R}^{2}$ with a triangular Minkowski-reduced basis $\beta=\{v_1,v_2\}=\{(1,0),(a,b)\}$   such that the angle $\theta$ between $v_{1}$ and $v_{2}$ satisfies \textup{$\frac{\pi}{2}\leq\theta\leq \frac{2\pi}{3}$.} The error probability $P_e$ for the Babai partition is given by 
\begin{equation}
P_e=F(a, b)=\frac{-a-a^{2}}{4b^{2}}=\frac{1-(1+2a)^{2}}{16b^{2}}.
\end{equation} 
\end{theorem}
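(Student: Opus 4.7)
The plan is to describe the error region $\mathcal{B}(0)\setminus\mathcal{V}(0)$ explicitly as a disjoint union of four right triangles, compute their areas, and divide by $\mathrm{area}(\mathcal{B}(0))$. Because the basis is upper triangular with diagonal entries $1$ and $b$, the Babai cell is the axis-aligned rectangle $[-1/2,1/2]\times[-b/2,b/2]$, whose area $b$ agrees with $\mathrm{vol}(\mathcal{V}(0))$; hence $P_e=\mathrm{area}(\mathcal{B}(0)\setminus\mathcal{V}(0))/b$.

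By Lemma~\ref{lemmathird} and the hypothesis $\theta\in[\pi/2,2\pi/3]$, the relevant vectors of $\mathcal{V}(0)$ are $\pm(1,0)$, $\pm(a,b)$ and $\pm(1+a,b)$. The bisectors to $\pm(1,0)$ are the vertical sides $x=\pm 1/2$ of $\mathcal{B}(0)$, so they do not carve any error out of $\mathcal{B}(0)$. Using the central symmetry of both cells, I would restrict to the upper half $y\ge 0$ and show that the error region there consists of exactly two disjoint right triangles located at the top corners. Specifically, substituting the corners into the defining half-space inequalities and using $-1/2\le a<0$, one checks that $(1/2,b/2)$ lies on the far side of the bisector of $(1+a,b)$ but on the correct side of the bisector of $(a,b)$, while $(-1/2,b/2)$ lies on the far side of the bisector of $(a,b)$ only. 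The bisectors to $-(a,b)$ and $-(1+a,b)$ touch the closed upper half only on its boundary and so contribute nothing.

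Next, each triangle can be computed directly by intersecting the relevant bisector with the two incident sides of the rectangle. For the corner $(1/2,b/2)$, the bisector of $(1+a,b)$ meets $y=b/2$ at $x=(1+a)/2$ and meets $x=1/2$ at $y=(a+a^2+b^2)/(2b)$, producing a right triangle with legs $-a/2$ and $-(a+a^2)/(2b)$. An analogous computation at the corner $(-1/2,b/2)$, using the bisector of $(a,b)$, gives legs $(1+a)/2$ and $-a(1+a)/(2b)$. Comparing the $x$-extents on $y=b/2$ (namely $(1+a)/2>a/2$) confirms that the two triangles are disjoint.

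Finally, adding the two areas uses the tidy simplification $a^2(1+a)-a(1+a)^2=-a(1+a)$; doubling by central symmetry and dividing by $b$ yields $P_e=(-a-a^2)/(4b^2)$, and the equivalent form in the statement follows by expanding $(1+2a)^2=1+4a+4a^2$. The main obstacle is the case analysis that pins down exactly which bisector clips each corner of $\mathcal{B}(0)$ under the constraint $-1/2\le a\le 0$; once that combinatorial picture is fixed, the remaining work is routine plane geometry.
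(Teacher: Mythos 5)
Your proposal is correct and follows essentially the same route as the paper: both identify the error region as four triangles cut from the corners of the Babai rectangle $[-1/2,1/2]\times[-b/2,b/2]$ by the bisectors of the relevant vectors $(a,b)$ and $(1+a,b)$ from Lemma~\ref{lemmathird}, and your two upper triangles (with vertices $(1/2,b/2)$, $((1+a)/2,b/2)$, $(1/2,(a+a^2+b^2)/(2b))$ and $(-1/2,b/2)$, $(a/2,b/2)$, $(-1/2,(a+a^2+b^2)/(2b))$) are exactly the ones the paper lists. Your added verification of which bisector clips which corner and of the disjointness of the triangles is a welcome bit of extra rigor, but the decomposition and the final computation are the same.
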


\begin{proof} To calculate $P_{e}$ for the lattice $\Lambda$, we first obtain the vertices of the Voronoi region. This is done by calculating the points of intersection of the perpendicular bisectors of the  three relevant vectors $(1,0), (a,b)$ and $(1+a,b)$ (according to Lemma \ref{lemmathird}, Fig. \ref{rvectors}). Thus the vertices of the Voronoi region are given by  $\pm(\frac{1}{2},\frac{a^{2}+b^{2}+a}{2b})$,
$\pm(-\frac{1}{2},\frac{a^{2}+b^{2}+a}{2b})$ and $\pm(\frac{2a+1}{2},\frac{-a^{2}+b^{2}-a}{2})$.

\begin{figure}[h!]
\begin{center}
		\includegraphics[height=5cm]{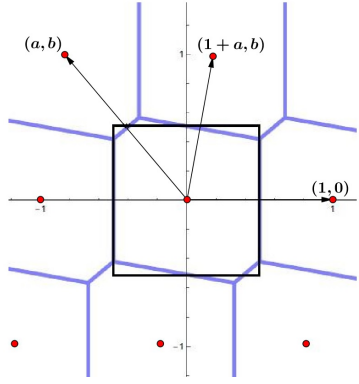}  
\caption{{Voronoi region, Babai partition and three relevant vectors}}
\label{rvectors}
\end{center}
\end{figure}

$P_{e}$ is then computed as the ratio between the area of the Babai
region which is not overlapped by the Voronoi region $\mathcal{V}(0)$ and the area $|b|$ of the Babai
region. From Fig. \ref{rvectors}, we get the error as the sum the areas of four triangles, where two of them are defined respectively by the points $\left(\frac{1}{2},\frac{b}{2}\right), \left(\frac{1}{2},\frac{a^{2}+a+b^{2}}{2b}\right), \left(\frac{a+1}{2}, \frac{b}{2}\right)$ and $\left(-\frac{1}{2},\frac{b}{2}\right), \left(-\frac{1}{2},\frac{a^{2}+a+b^{2}}{2b}\right), \left(\frac{a}{2}, \frac{b}{2}\right).$ The remaining two triangles are symmetric to these two. Therefore, the error probability is the sum of the four areas, normalized by the area of the Voronoi region $|\det(V)|=|b|$. The explicit formula for it is given by $F(a, b) =  \dfrac{1}{4} \dfrac{-a-a^{2}}{b^2}.$
\end{proof}

	We obtain the following Corollary, illustrated in Fig. \ref{contour}, from  the error probability $P_{e}=F(a,b)=\frac{1}{4} \frac{a}{b^2}(1-a)=\frac{1-(1+2a)^{2}}{16b^{2}}$ obtained in Theorem \ref{thmmain} with  $b \geq \frac{\sqrt{3}}{2}$ and $-\frac{1}{2} \leq a \leq 0.$
		
\begin{corollary} For any two dimensional lattice and a Babai partition constructed from the QR decomposition associated with a Minkowski-reduced basis where $\frac{\pi}{2} \leq \theta \leq \frac{2\pi}{3},$ we have 
\begin{equation}
0 \leq P_{e} \leq \frac{1}{12},
\end{equation}
and
\begin{itemize}
\item[a)] $P_{e}=0 \Longleftrightarrow a=0,$ i.e., the lattice is orthogonal.
\item[b)] $P_{e}=\frac{1}{12} \Longleftrightarrow (a,b)=\left(-\frac{1}{2}, \frac{\sqrt{3}}{2}\right),$ i.e., the lattice is equivalent to hexagonal lattice.
\item[c)] the level curves of $P_{e}$ are described as ellipsoidal arcs  (Figure \ref{contour}) in the region $a^{2}+b^{2} \geq 1$ and $-\frac{1}{2} \leq a \leq 0.$
\end{itemize}

\end{corollary}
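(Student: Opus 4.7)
The plan is to analyze the explicit formula $F(a,b) = \frac{-a(1+a)}{4b^{2}} = \frac{1-(1+2a)^{2}}{16b^{2}}$ from Theorem \ref{thmmain} on the admissible parameter region
$$D = \left\{(a,b) \in \mathbb{R}^{2} : -\tfrac{1}{2} \le a \le 0,\; b>0,\; a^{2}+b^{2} \ge 1\right\}.$$
Note that for $(a,b)\in D$ one automatically has $b \ge \sqrt{1-a^{2}} \ge \sqrt{3}/2$, so the binding constraint on $b$ everywhere in $D$ is $b^{2} \ge 1-a^{2}$.

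For part (a), I would simply observe that the numerator $-a(1+a)$ vanishes precisely when $a=0$ or $a=-1$; only $a=0$ lies in $D$, and in that case the basis $\{(1,0),(0,b)\}$ is orthogonal. Since $-a(1+a) \ge 0$ on $[-1/2,0]$, we already get $P_{e} \ge 0$, with equality iff $a=0$.

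For part (b), the key idea is to reduce the two-variable maximization to a one-variable problem via the constraint $b^{2}\ge 1-a^{2}$. Since $F$ is strictly decreasing in $b^{2}$ (for $a$ fixed with $-a(1+a)>0$), the supremum over $D$ is attained on the boundary arc $b^{2}=1-a^{2}$. Substituting this gives
$$F(a,\sqrt{1-a^{2}}) = \frac{-a(1+a)}{4(1-a)(1+a)} = \frac{-a}{4(1-a)} =: g(a).$$
A short derivative computation shows $g'(a) = -\frac{1}{4(1-a)^{2}}<0$, so $g$ is strictly decreasing on $[-1/2,0]$ and attains its maximum at $a=-1/2$, with $g(-1/2)=1/12$. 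Tracking back, equality $P_{e}=1/12$ forces both $a=-1/2$ and $b^{2}=1-a^{2}=3/4$, i.e.\ $(a,b)=(-1/2,\sqrt{3}/2)$, the hexagonal lattice. I expect the main (mild) obstacle here to be justifying that the maximum is actually attained on the boundary $a^{2}+b^{2}=1$ rather than in the interior of $D$; this follows from the strict monotonicity in $b$, but deserves an explicit remark.

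For part (c), the level set $\{F(a,b)=c\}$ is described, using the second form of $F$, by $(1+2a)^{2} + 16c\,b^{2} = 1$, which is the equation of an ellipse centered at $(a,b)=(-1/2,0)$ with semi-axes $1/2$ and $1/(4\sqrt{c})$. Intersecting this ellipse with $D$ yields an arc (the connected piece lying in the strip $-1/2 \le a \le 0$ with $b \ge \sqrt{1-a^{2}}$), which proves the ellipsoidal-arc description. I would conclude by noting that as $c$ ranges over $(0,1/12]$ these arcs foliate $D$, consistent with the monotone analysis used in part (b).
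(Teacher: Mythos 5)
Your proposal is correct and follows essentially the same (and only natural) route the paper implies: direct optimization of the closed-form $F(a,b)=\frac{-a(1+a)}{4b^{2}}=\frac{1-(1+2a)^{2}}{16b^{2}}$ over the Minkowski-reduced parameter region, which the paper states without writing out. Your boundary reduction $b^{2}=1-a^{2}$, the monotonicity of $g(a)=\frac{-a}{4(1-a)}$, and the ellipse identification $(1+2a)^{2}+16c\,b^{2}=1$ are all accurate and in fact supply the details the paper omits.
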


\begin{figure}[h!]
\begin{center}
		\includegraphics[height=7cm]{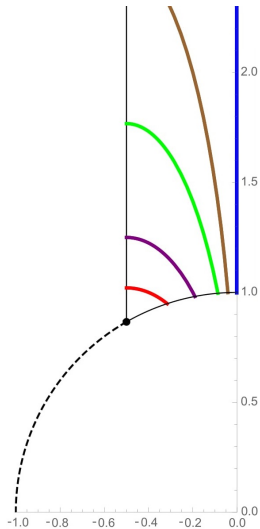}  
\caption{{Level curves of $P_{e}=k,$ in right-left ordering, for $k=0, k=0.01, k=0.02, k=0.04, k=0.06$ and $k=1/12 \approx 0.0833.$ Notice that $a$ is represented in the horizontal axis and $b$ in vertical axis. }}
\label{contour}
\end{center}
\end{figure}

\subsection{Variations with Packing Density and Angle}

	The packing density for a lattice with basis $\{(1,0),(a,b)\}$ in Minkowski-reduced form is given by $\Delta_2(a,b)=\pi/4b$ and $F(a,\Delta_2)=\frac{{\Delta^{2}_2}[1-(1+2a)^{2}]}{\pi^{2}},$ following the notation from Theorem \ref{thmmain}. For a fixed density $\Delta_2$ (fixed $b$) the error probability is decreasing with $a$ and for fixed $a,$ it is increasing with $\Delta_2$ (decreasing with $b$).
		

	So, if we consider the error probability for a given density $\Delta_2$, we have that $F(a,\Delta_2)$ is minimized by $a=a^*$, where
\begin{eqnarray}
a^*=\left\{ \begin{array}{cc} 0, &  \Delta_2 \leq \frac{\pi}{4} \ \ \ \  (b^2 \geq 1) \nonumber \\
                       -\sqrt{1-\left( \frac{\pi}{4\Delta_{2}}\right)^2},  & \frac{\pi}{4} \leq \Delta_2 \leq \frac{\pi}{2\sqrt{3}} \  \ \ \ (3/4 \leq b^2 < 1). \end{array} \right.
\end{eqnarray}
and maximized by $a= -\frac{1}{2},$ for any $\Delta_2.$ 

	Figure \ref{fig:packden2} represents the minimum error probability function $F(a, \Delta_2)$ for $\frac{\pi}{4} \leq \Delta_2 \leq \frac{\pi}{2\sqrt{3}}.$

\begin{figure}[H] 
   \centering
   \includegraphics[width=3in]{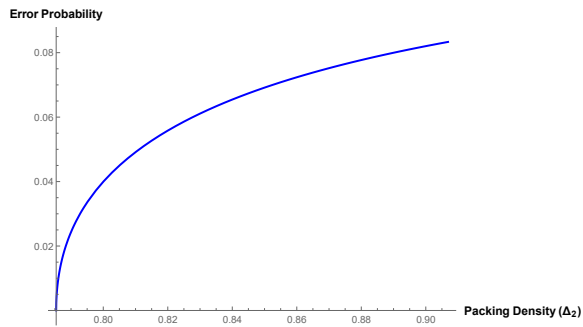} 
   \caption{Minimum error probability for given packing density assuming $\frac{\pi}{4} \leq \Delta_2 \leq \frac{\pi}{2\sqrt{3}}$}
   \label{fig:packden2}
\end{figure}

	 Note also that 
if $\rho:={\left\Vert v_{2}\right\Vert }$
and  $\theta$ is defined to be the angle between the basis vectors, then the result of Theorem \ref{thmmain} can be rewritten as 
\begin{equation}
P_{e}=H(\theta,\rho)= \frac{1-(1+2\rho \cos^{2} \theta)^{2}}{16\rho^{2} \sin^{2} \theta}.
\end{equation}
In this case, we can see that for a fixed $\rho,$ the error probability increases with $\theta,$ achieving its minimum in $\theta=0$ and maximum in $\theta= \frac{\pi}{2} + \arccos \frac{1}{2\rho}.$

%

\section{Error Probability Analysis for Arbitrary Three Dimensional Lattices} \label{sec3d}

	To analyse the error in the three dimensional case, we developed and implemented an algorithm in the software \textit{Mathematica}~\cite{Wolfram} which calculates the error probability of any three dimensional lattice, given an obtuse superbase. We assume, as we did in the two dimensional analysis, an initial upper triangular lattice basis given by $\{(1,0,0),(a,b,0),(c,d,e)\},$ where $a,b,c,d,e \in \mathbb{R}.$ It can be accomplished by performing a QR decomposition and a multiplication by a scalar factor in the original basis.
	
	It is important to remark that  the error probability is, in the general case, dependent on the basis ordering. Our algorithm searches over all orderings and determines the best one. As an example, the performance of the BCC lattice is invariant over basis ordering, due to its symmetries. On the other hand, for the FCC lattice, depending on how the basis is ordered, we can find two different error probabilities, $0.1505$ and $0.1667,$ but only $0.1505$ is tabulated. 


A detailed description of the algorithm is presented below. 

\vspace{0.3cm}

\textbf{\underline{Algorithm 1:} Error probability of the closest lattice point problem in a distributed system (three dimensional case)}

\vspace{0.3cm}

\begin{itemize}


\item[]\textbf{Voronoi region:} Provided an obtuse superbase, the vertices and faces that define the Voronoi region of $\Lambda$ are determined by Equations \eqref{eqr3} and \eqref{eqcoor}, following the method proposed by Conway and Sloane \cite{ConwaySloane:1992}. In this stage, we determine, generate and classify the correspondent Voronoi region of $\Lambda$ into one of five possibilities described in Figure \ref{5types}. 

\vspace{0.2cm}

\item[] \textbf{Babai partition:} Determine the vertices of the Babai cell. Since we have assumed a generator matrix in  upper triangular form, $\{(1,0,0),(a,b,0),(c,d,e)\}$, the vertices are:

\begin{eqnarray}
\left(\frac{1}{2}, -\frac{b}{2}, \frac{e}{2} \right), \left(\frac{1}{2}, \frac{b}{2}, \frac{e}{2} \right), \left(-\frac{1}{2}, -\frac{b}{2}, \frac{e}{2} \right), \left(-\frac{1}{2}, \frac{b}{2}, \frac{e}{2} \right), \nonumber \\
\left(\frac{1}{2}, -\frac{b}{2}, -\frac{e}{2} \right), \left(\frac{1}{2}, \frac{b}{2}, -\frac{e}{2} \right), \left(-\frac{1}{2}, -\frac{b}{2}, -\frac{e}{2} \right), \left(-\frac{1}{2}, \frac{b}{2}, -\frac{e}{2} \right).
\end{eqnarray}

\vspace{0.2cm}

\item[] \textbf{Intersection:} In this stage, using a function in Mathematica~\cite{Wolfram}, we calculate the intersection between the  Voronoi and Babai regions obtained previously. This function runs through all points that define both solids and select the coincident ones, providing in the end of the process, the vertices that determine the intersection region. We calculate then the volume of the intersection normalized by the volume of the lattice $\Lambda.$ The algorithm determines first the format of each type of Voronoi cell (Figure \ref{5types}) to simplify the calculations of the error probability. To be more specific, if all conorms $p_{ij}, (0 \leq i < j \leq 3)$ are nonzero (truncated octahedron) or if only one conorm is zero (hexa-rhombic dodecahedron) or if two collinear conorms are zero (rhombic dodecahedron), we implement the general intersection algorithm, defined as: let $v_{1}, e_{1}, f_{1}$ be, respectively, the set of vertices, edges and faces that define the Babai region of $\Lambda$ and $v_{2},e_{2}, f_{2},$ be, respectively, the set vertices, edges and faces that define the Voronoi region of $\Lambda.$ Thus, we solve:

\begin{center}
{\small \texttt{Solve \{Or $\{x,y,z\} \in e_{1}$ and $\{x,y,z\} \in f_{2}$ Or $\{x,y,z\} \in e_{2}$ and $\{x,y,z\} \in f_{1}$\}.}}
\end{center}

	The union of points $(x,y,z)$ resulting from the previous system will define the intersection of Voronoi and Babai regions of $\Lambda.$ For the two remaining cases, i.e., when we have two non-collinear zeros (hexagonal prism) we only calculate the intersection between the hexagonal basis and the rectangular basis of both prisms and when we have four zeros,  the error probability is zero.
	
\vspace{0.2cm}

\item[] \textbf{Packing density}: Finally, 
we calculate the packing density $\Delta_3$. 
\end{itemize}

\subsection{Calculations for Known Lattices}
We present results obtained by applying Algorithm 1 to some known lattices in this section.

In Fig.~\ref{fig-plot1} we have 
\begin{itemize}
\item in \textbf{\textcolor{red}{red}}, the cubic lattice $\mathbb{Z}^{3}$ with basis $\{(1,0,0),(0,1,0),(0,0,1)\};$
\item in \textbf{\textcolor{green}{green}}, the lattice with basis $\{(1,0,0),$ $(-\frac{1}{2}, -\frac{\sqrt{3}}{2}, 0),(0, 0, 1)\},$ Voronoi region:  hexagonal prism;
\item in \textbf{\textcolor{blue}{blue}}, the body-centered cubic (BCC) lattice, with basis $\{(1,0,0),(-\frac{1}{3},\frac{2 \sqrt{2}}{3},0), (-\frac{1}{3},-\frac{\sqrt{2}}{3},\sqrt{\frac{2}{3}})\}$, Voronoi region: truncated octahedron;
\item in \textbf{black}, the face-centered cubic (FCC) lattice, with basis $\{(1,0,0),(0,1,0),(-\frac{1}{2},-\frac{1}{2},\frac{1}{\sqrt{2}})\}$; Voronoi region: rhombic dodecahedron;
\item in \textbf{\textcolor{purple}{purple}}, lattice  with basis $\{(1,0,0),$ $(-\frac{1}{2}, -\frac{\sqrt{5}}{2}, 0),(0, \frac{1}{\sqrt{5}}, \frac{2}{\sqrt{5}})\},$  Voronoi region: hexa-rhombic dodecahedron.
\end{itemize}

\begin{figure}[h]
\centering
\begin{minipage}{.55\linewidth}
  \includegraphics[width=\linewidth]{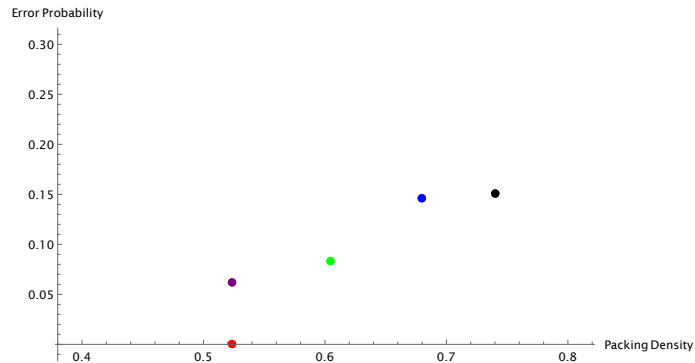}
  \caption{Performance of known lattices}
  \label{fig-plot1}
\end{minipage}
\end{figure}

Table \ref{lp} below presents some lattice performances when we run Algorithm $1.$

\begin{table}[H] 
\caption{Performance in Algorithm $1$ for known lattices} \label{lp}
\centering
 \begin{tabular}{ | c | c | c | c | c | }  
  \hline\noalign{\smallskip} 
  Lattice/Voronoi cell & Notation Table $15.6$,\cite{conwaysloane} & Conorms $(\alpha, \beta, \gamma,a,b,c)$ & $\Delta_{3}$ & $P_{e}$  \\
  \noalign{\smallskip}\hline\noalign{\smallskip}
  Cubic/ Cuboid & $1_{}1_{}1$& $(-1,-1,-1,0,0,0)$ & $0.5235$ & $0$ \\
  Hexa-rhombic dodecahedron & $2_{1}3_{1}2$ & $(-\frac{1}{2},-\frac{1}{2},-\frac{1}{2},0,-\frac{1}{2},-\frac{1}{2})$  & $0.5235$ & $0.0617$ \\
   Hexagonal prism (corresp. $A_{2}$ lattice) & $2_{-1}2_{}2$ & $(-\frac{1}{2},-\frac{1}{2},-1, 0,0,-\frac{1}{2})$  & $0.6046$ & $0.0833$ \\
   BCC/ Truncated octahedron & $3_{1}3_{1}3_{-1}$ & $(-1,-,1-,1,-1,-1,-1)$ & $0.6801$ & $0.1459$  \\
   FCC/ Rhombic dodecahedron & $2_{1}2_{1}2$ & $(-\frac{1}{2},-\frac{1}{2},0,-\frac{1}{2},-\frac{1}{2},0)$ & $0.7404$ &  $0.1505$   \\
   \noalign{\smallskip}\hline
\end{tabular}
\end{table}	 

	We remark that the error probability for the hexagonal prism is identical to the two dimensional case (see Theorem \ref{thmmain}) and cuboids  have a null error probability (when aligned to the coordinate axes). We also see that the face-centered cubic lattice, which results in the  best packing density for lattices in three dimensions, is the worst case when one considers its error probability.

\subsection{Random Lattice Selection}




	In this section we applied Algorithm 1 to lattices whose basis was chosen randomly. Specifically, we start by considering a basis at random, with the format $\{(1,0,0),$ $(a,b,0),(c,d,e)\},$ where $a,b,c,d,e$ are real numbers in the range $[-4,4].$ Then, the program tests if this basis is both an obtuse superbase and Minkowski-reduced according to Theorem \ref{minkos}. If this condition is false, another random basis is selected, until a suitable one is found. At the end of this stage, we will have a randomly chosen obtuse, Minkowski-reduced superbase for the lattice $\Lambda.$
	
	In Figure \ref{fig-plot3}, we have plotted the known points already seen in Figure \ref{fig-plot1}, together with \textbf{\textcolor{orange}{orange}} points that are associated with lattices having a packing density greater than $0.4$ randomly chosen as above.  Note that with overwhelming probability, a randomly chosen basis will have a truncated octahedron as a Voronoi region (the most general Voronoi region in three dimensions). 

\begin{figure}[h]
\centering
\begin{minipage}{.55\linewidth}
  \includegraphics[width=\linewidth]{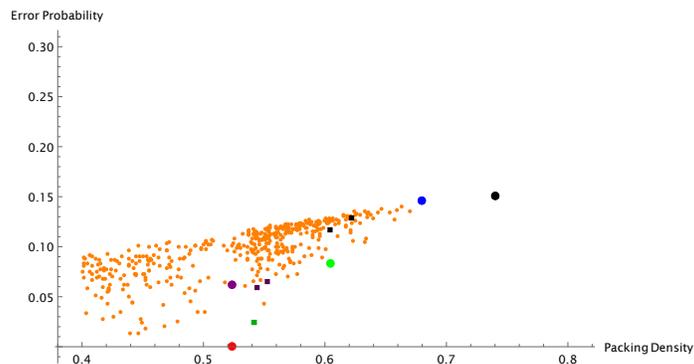}
  \caption{Comparison between random and known performances}
  \label{fig-plot3}
\end{minipage}
\end{figure}

However, by considering  conorms that are approximately zero, we can identify cases that are `almost' like one of the degenerate polyhedra. These cases are presented in Table~\ref{algorandom}, illustrated as square points in Figure \ref{fig-plot3}, where the color characterizes the cell type, following the notation of Figure \ref{fig-plot1}.

\begin{table}[H] 
\caption{Performance in Algorithm $1$ for random lattices} \label{algorandom}
\centering
 \begin{tabular}{ | c | c | c | c | }  
  \hline\noalign{\smallskip} 
  
  (Aproximate) Voronoi cell & Conorms $(\alpha, \beta, \gamma,a,b,c)$ & $\Delta_{3}$ & $P_{e}$  \\
  \noalign{\smallskip}\hline\noalign{\smallskip}
   Hexa-rhombic dodecahedron & $(-0.4447,-0.7089,-0.7596,-0.0007,-0.3055,-0.2903)$ & $0.5441$ & $0.0592$ \\
   Hexa-rhombic dodecahedron & $(-0.3128,-0.7110,-0.6812,-0.0005,-0.4535,-0.2884)$  & $0.5527$ & $0.0652$ \\
   Rhombic-dodecahedron & $(-0.0574,-0.5159,-0.7771,-0.0041,-0.4708,-0.4798) $& $0.6044$ & $0.1169$ \\
   Rhombic-dedecahedron & $(-0.5280,-0.05218,-0.6273,-0.4968,-0.0650,-0.4509) $& $0.6220$ & $0.1280$ \\
   Hexagonal prism & $(-0.5246,-0.9048,-0.6788,-0.0201,-0.4024,-0.0750) $& $0.5417$ & $0.0237$ \\
      \noalign{\smallskip}\hline
\end{tabular}
\end{table}

\subsection{Remarks}
We conjecture, after $5000$ trials that:
\begin{conjecture} For any three dimensional lattice and a Babai partition constructed from the QR decomposition associated with an obtuse superbase which is also Minkowski-reduced,
\begin{equation}
0 \leq P_{e} \leq 0.1505.
\end{equation}
\end{conjecture}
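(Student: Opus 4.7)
The plan is to parametrise all admissible lattices by the six Selling parameters $p_{ij} \leq 0$, $0 \leq i < j \leq 3$, of an obtuse Minkowski-reduced superbase, normalised by fixing the lattice volume to $1$. By Theorem~\ref{minkos} every three-dimensional lattice admits such a representation, and after quotienting by the $S_4$-action permuting the superbase vectors the parameter space $\mathcal{P}$ is compact. Since the algorithm in Sec.~\ref{sec3d} takes the minimum of $P_e$ over basis orderings, the resulting map $\mathcal{P} \to [0,1]$ is continuous and piecewise real-algebraic, so its supremum is attained.

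First I would stratify $\mathcal{P}$ by the number of vanishing conorms, following the five combinatorial types in Fig.~\ref{5types}: cuboid (four zeros), hexagonal prism, rhombic dodecahedron (two collinear zeros), hexa-rhombic dodecahedron (one zero), and the generic truncated octahedron (none). On the interior of each stratum both the Voronoi cell, whose vertices are given in closed form by Eqs.~\eqref{eqr3}--\eqref{eqcoor}, and the axis-aligned Babai box with side lengths $|v_{11}|,|v_{22}|,|v_{33}|$ extracted from the QR-triangularised basis deform analytically, so $\mathrm{vol}(\cB(0) \cap \cV(0)^c)/\mathrm{vol}(\cB(0))$ is a rational function of the conorms.

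On the degenerate strata the bound is elementary: cuboids give $P_e = 0$, and the hexagonal-prism case reduces to the two-dimensional analysis of Theorem~\ref{thmmain}, yielding $P_e \leq 1/12 < 0.1505$. On the rhombic-dodecahedron stratum I would maximise $P_e$ analytically subject to the volume constraint and verify that the supremum equals the FCC value $\approx 0.1505$, attained at the symmetric configuration with the three nonzero conorms along the zero-line being equal. This supplies the matching lower bound.

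The main obstacle is the two top strata (hexa-rhombic dodecahedron and truncated octahedron), where $P_e$ has no clean closed form and a direct KKT analysis explodes into cases according to which vertices of $\cB(0)$ lie inside $\cV(0)$. My plan of attack is: (i) use the explicit vertex enumeration from Algorithm~1 to write the intersection volume as a signed sum over simplices, obtaining a single rational expression on each combinatorial cell; (ii) apply Lagrange multipliers with the volume normalisation and the constraints $p_{ij} \leq 0$, and argue that interior critical points either respect an enforced symmetry pushing them onto lower strata or violate the Minkowski-reduction inequalities of Proposition~\ref{propmink}; (iii) use the piecewise-algebraic continuity of $P_e$ to transfer boundary suprema to the adjacent lower strata, recursively reducing to the rhombic-dodecahedron analysis above. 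If a purely analytic reduction proves intractable, the same piecewise-algebraic structure combined with a Lipschitz modulus on each cell would allow upgrading the $5000$-trial evidence to a certified bound via interval arithmetic on the compact $\mathcal{P}$. A particularly attractive but speculative route, motivated by the density--$P_e$ correlation in Table~\ref{lp}, would be to establish an upper bound $P_e \leq g(\Delta_3)$ tight at FCC and combine it with the classical bound $\Delta_3 \leq \pi/\sqrt{18}$; whether such a monotone $g$ exists is itself the most delicate part of the problem.
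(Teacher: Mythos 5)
First, note that the paper does not prove this statement: it is presented as a conjecture, supported only by running Algorithm~1 on $5000$ randomly generated obtuse, Minkowski-reduced superbases together with the tabulated values for known lattices (Table~\ref{lp}). There is therefore no proof in the paper to compare yours against, and your argument must stand entirely on its own.

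As it stands, your outline is a research program rather than a proof, and the gaps sit exactly where the difficulty lies. The degenerate strata are fine (cuboid gives $P_e=0$; the hexagonal prism reduces to Theorem~\ref{thmmain} and the bound $1/12$), but: (1) on the rhombic-dodecahedron stratum you say you would ``maximise $P_e$ analytically \dots and verify that the supremum equals the FCC value'' --- that verification is the entire content of the matching upper bound on that stratum and is not carried out; it is not routine, because $P_e$ is only \emph{piecewise} rational in the conorms, with pieces indexed by which vertices and edges of $\cB(0)$ cross which faces of $\cV(0)$. (2) For the two top strata, step (ii) asserts that interior critical points either fall onto lower strata by symmetry or violate Proposition~\ref{propmink}; no argument is given, and the paper's own numerics make this a dangerous hope: BCC lies in the interior of the truncated-octahedron stratum and already achieves $P_e\approx 0.1459$, within about $3\%$ of the conjectured maximum $0.1505$, so the generic stratum cannot be dismissed softly. (3) Transferring ``boundary suprema to adjacent lower strata'' controls only the boundary of each combinatorial cell, not interior maxima, so the recursion does not close. (4) The interval-arithmetic fallback and the hypothetical monotone bound $P_e\le g(\Delta_3)$ are, as you say yourself, speculative; the latter is essentially the paper's second conjecture restated. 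The stratification by conorm zero-patterns and the compactness/continuity setup are the right skeleton (and go beyond what the paper attempts), but until items (1)--(3) are filled in, the statement remains exactly what the paper calls it: a conjecture backed by sampling.
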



	Compared with the two dimensional case, we have an increase of $7.32\%$ in the conjectured bound for the error probability and we expect this number to grow more as the dimension increases. 
	
	We also conjecture, assuming a more "spherical shape" for Voronoi regions of densest lattices the following
	
\begin{conjecture} The worst error probability for a lattice in dimension $n$ is achieved by the densest lattice and it tends to one when $n$ goes to infinity.
\end{conjecture}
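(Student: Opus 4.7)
The plan is to split Conjecture 2 into two statements: (A) within each dimension $n$, the lattice maximizing $P_e$ is the densest lattice, and (B) this maximal value tends to $1$ as $n \to \infty$. After normalizing $\mathrm{vol}(\Lambda)=1$, so that the axis-aligned Babai rectangle $\mathcal{B}(0)=\prod_i[-s_i/2,s_i/2]$ and the Voronoi cell $\mathcal{V}(0)$ both have unit volume, one has $P_e = 1 - \mathrm{vol}\bigl(\mathcal{B}(0)\cap\mathcal{V}(0)\bigr)$, so both statements become statements about this intersection volume.

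For (B), I would invoke the widely-believed property that for the densest $n$-dimensional lattice the Voronoi region becomes increasingly spherical, in the sense that $\mathcal{V}(0)$ is Hausdorff-close to the Euclidean ball $B(0,R_n)$ of equal volume, where $R_n = V_n^{-1/n}\sim\sqrt{n/(2\pi e)}$ by Stirling. The main step is then to show by a concentration-of-measure argument that $\mathrm{vol}\bigl(\mathcal{B}(0)\cap B(0,R_n)\bigr)\to 0$. A uniform random point $x$ in the Babai rectangle satisfies $\mathbb{E}\|x\|^2=\tfrac{1}{12}\sum_i s_i^2$, which by AM-GM applied to the constraint $\prod_i s_i^2=1$ is at least $n/12$. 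Since $R_n^2\sim n/(2\pi e)<n/12$, a Chernoff estimate on the sum of the independent bounded variables $s_i^2 U_i^2$ places almost all of the rectangle's mass strictly outside the ball, and a dual argument on the ball (using $\mathbb{E}\|x\|^2=\tfrac{n}{n+2}R_n^2$ for $x$ uniform in $B(0,R_n)$ together with marginal bounds on $|x_i|$) gives the complementary direction.

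For (A), the intuition is that high packing density forces $\mathcal{V}(0)$ to approximate a ball, and a ball is the convex body maximally different from any axis-aligned rectangle of equal volume. Concretely, I would bound $\mathrm{vol}\bigl(\mathcal{B}(0)\cap\mathcal{V}(0)\bigr)$ using the sandwich $B(0,\rho)\subset\mathcal{V}(0)\subset B(0,R_{\mathrm{cov}})$, where $\rho$ is the packing radius and $R_{\mathrm{cov}}$ the covering radius, so that the error probability is squeezed between two quantities depending only on the shape of $\mathcal{B}(0)$ and on $\rho,R_{\mathrm{cov}}$. One would then use the Minkowski reduction inequalities of Proposition \ref{propmink} to control the Babai side lengths $s_i$ in terms of $\rho$, and show that a higher $\Delta_n$ (i.e.\ a larger normalized $\rho$) drives the lower bound on $P_e$ upward, so that the supremum over lattices is attained where $\Delta_n$ is maximal.

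The main obstacle is part (A): the Babai cell and the Voronoi cell are coupled because both are determined by the same Minkowski-reduced basis, so the maximization cannot be decoupled into independent ``shape of $\mathcal{B}$'' and ``shape of $\mathcal{V}$'' optimizations. Even in dimension $3$ the paper verifies this extremum only numerically (the FCC/rhombic-dodecahedron row of Table \ref{lp}). A rigorous proof will likely require either a case analysis on the combinatorial type of the Voronoi polytope — the five parallelohedra in dimension $3$ and their Delone-type generalizations in higher dimensions — or, more ambitiously, an isoperimetric-style inequality bounding $\mathrm{vol}(\mathcal{B}\cap\mathcal{V})$ directly in terms of $\Delta_n$ uniformly across all lattices, which would bypass the combinatorial classification entirely.
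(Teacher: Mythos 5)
First, note that the paper does not prove this statement: it is offered purely as a conjecture, supported only by the closed form in dimension $2$ (where the hexagonal lattice attains $P_e=1/12$) and by $5000$ random trials in dimension $3$ (where FCC attains $0.1505$). So there is no ``paper proof'' to compare against, and your proposal should be judged as a proof plan for an open problem. As such it has genuine gaps. For part (B), the key input --- that the Voronoi cell of the densest $n$-dimensional lattice is Hausdorff-close (or even close in symmetric-difference volume) to the ball of equal volume --- is itself unproven; what is known about the densest packings only guarantees an inscribed ball of radius roughly $R_n/2$ after normalization, which is far from sphericality, and no useful upper bound on the covering radius of the densest lattice is available. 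Your concentration step also needs the Babai side lengths $s_i=|v_{ii}|$ to be comparable: if a single $s_i^2$ carries a constant fraction of $\sum_i s_i^2$, the sum $\sum_i s_i^2U_i^2$ need not concentrate, and controlling the Gram--Schmidt norms of a Minkowski-reduced basis uniformly in $n$ is nontrivial (they can degrade exponentially relative to the successive minima). Finally, the quantity $P_e$ depends on the ordered basis, not only on the lattice (the paper itself records two values, $0.1505$ and $0.1667$, for FCC), so the conjecture needs a convention fixing the basis before a proof can even be formulated.

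For part (A), your own diagnosis is accurate but the proposed mechanism does not work as stated: the sandwich $B(0,\rho)\subset\cV(0)\subset B(0,R_{\mathrm{cov}})$ bounds $vol(\cB(0)\cap\cV(0))$ from below via the inscribed ball and from above via the circumscribed ball, whereas to show the densest lattice is \emph{worst} you must upper-bound $vol(\cB(0)\cap\cV(0))$ for the densest lattice and lower-bound it for all others; the upper bound would have to come from $R_{\mathrm{cov}}$, which is not controlled by the packing density $\Delta_n$, and the needed monotonicity in $\Delta_n$ does not follow from the inscribed-ball bound alone (a large inscribed ball tends to \emph{increase} $vol(\cB\cap\cV)$, pushing $P_e$ down, the opposite of what you want). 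Indeed, even in dimension $2$ the paper's Theorem 2 shows that for fixed density the error probability varies with the shape parameter $a$, so density alone cannot determine $P_e$ and the extremal claim requires an argument coupling $\cB$ and $\cV$ through the common reduced basis --- exactly the coupling you correctly identify as the obstacle. In short, the proposal is a sensible research program, but neither (A) nor (B) is established by the steps you give, and each currently rests on an unproved geometric hypothesis.
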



\section{Rate Computation for Constructing a Babai Partition for arbitrary $n>1$}
\label{sec:babairate}
Communication protocols are presented for the centralized and interactive model along with associated rate calculations in the limit as $\alpha \rightarrow 0$.   

\subsection{Centralized Model}
We now describe the transmission protocol  $\Pi_{c}$ by which the nearest plane lattice point can be determined at the fusion center $F$.
Let $v_{ml}/v_{mm}=p_{ml}/q_{ml}$ where $p_{ml}$ and $q_{ml}>0$ are relatively prime. Note that we are assuming the generator matrix is such that the aforementioned ratios are rational, for $l>m.$ Let $q_m=l.c.m \ \{q_{ml}, l>m\}$, where $l.c.m$ denotes the least common multiple of its argument. By definition $q_n=1$.
\begin{protocol} (Transmission, $\Pi_{c}$). Let $s(m)\in\{0,1,\ldots,q_m-1\}$ be the largest $s$ for which $[x_m/v_{mm}-s/q_m]=[x_m/v_{mm}]$. Then node $m$ sends 
$\tilde{b}_m=[x_m/v_{mm}]$ and $s(m)$ to $F$, $m=1,2,\ldots,n$ (by definition $s(n)=0$).
\end{protocol}

Let $\overline{b}=(b_1,b_2,\ldots,b_n)$ be the coefficients of $\lambda_{np}$, the Babai point.
\begin{theorem}\label{thmcost}
The coefficients of the Babai point $\overline{b}$ can be determined at the fusion center $F$ after running transmission protocol $\Pi_c$.
\end{theorem}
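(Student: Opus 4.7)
The plan is backward induction on $m$, from $m=n$ down to $m=1$, with inductive hypothesis that $F$ can compute $b_m$ from the received transmissions together with the previously determined coefficients $b_n,\ldots,b_{m+1}$.

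First I would unwind the nearest-plane recursion for the triangular generator matrix. Because $\{v_1,\ldots,v_i\}$ spans the coordinate subspace $\mathcal{S}_i=\operatorname{span}(e_1,\ldots,e_i)$ and $v_{i,i-1}^{\perp}=v_{ii}\,e_i$, the Babai iteration collapses to the standard back-substitution formula
\begin{equation*}
b_m = \left[\frac{x_m}{v_{mm}} - \sum_{l>m} b_l \frac{v_{ml}}{v_{mm}}\right],\qquad m=n,n-1,\ldots,1.
\end{equation*}
This immediately handles the base case: $b_n=[x_n/v_{nn}]=\tilde{b}_n$ is exactly what node $n$ sends.

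For the inductive step I would substitute $v_{ml}/v_{mm}=p_{ml}/q_{ml}$ and observe that $c_m:=\sum_{l>m} b_l\,p_{ml}/q_{ml}$ has denominator dividing $q_m=\lcm\{q_{ml}:l>m\}$, so $c_m=k_m/q_m$ for an integer $k_m$ that $F$ can compute explicitly from the received $b_l$'s and the known generator entries. The remaining task then reduces to evaluating $[x_m/v_{mm}-k_m/q_m]$ from the pair $(\tilde{b}_m,s(m))$ alone.

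The main obstacle, and core lemma of the argument, is to show that $(\tilde{b}_m,s(m))$ determines $[x_m/v_{mm}-k/q_m]$ for every integer $k$. Writing $x_m/v_{mm}=\tilde{b}_m+f$ with $f\in[-1/2,1/2)$ under the convention $[x]=\lfloor x+1/2\rfloor$, the definition of $s(m)$ forces $s(m)=\lfloor q_m(f+1/2)\rfloor$. Decomposing $k=aq_m+r$ with $r\in\{0,\ldots,q_m-1\}$, a short case analysis on whether $r/q_m\leq f+1/2$ yields
\begin{equation*}
\left[\frac{x_m}{v_{mm}}-\frac{k}{q_m}\right]=\begin{cases}\tilde{b}_m-a, & r\leq s(m),\\ \tilde{b}_m-a-1, & r>s(m).\end{cases}
\end{equation*}
Care is required only at the half-integer boundary; fixing the rounding convention above makes the case split exact. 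Applying this with $k=k_m$ produces $b_m$ at $F$, closing the induction and establishing the theorem.
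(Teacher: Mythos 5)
Your proposal is correct and follows essentially the same route as the paper's proof: both unwind the nearest-plane recursion into back-substitution, split the correction term $\sum_{l>m} b_l v_{ml}/v_{mm}$ into its integer part and a fractional part of the form $s/q_m$, and use $s(m)$ as the threshold deciding whether the rounded value drops by one. Your explicit identification of $s(m)=\lfloor q_m(f+1/2)\rfloor$ and the attention to the half-integer rounding convention make the threshold argument slightly more rigorous than the paper's terse case split, but the underlying decomposition and induction order ($m=n$ down to $1$) are identical.
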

\begin{proof}
Observe that each coefficient of $\overline{b}$ is given by
\begin{eqnarray}
\lefteqn{b_m=} \ \ \ \ \ \ \ \ \left[ \frac{x_m-\sum_{l=m+1}^{n} b_{l}v_{m,l}}{v_{mm}}\right],
m=1,2,\ldots,n,
\end{eqnarray}
which is written in terms of  $\{z \}$ and $\lfloor z \rfloor$, the fractional and integer parts of real number $z$, resp.,  ($z=\lfloor z\rfloor +\{z\}$, $0 \leq \{z\} < 1$) by
\begin{eqnarray}
\lefteqn{b_m=} \ \ \ \ \ \ \ \ \ \left[ \frac{x_m}{v_{m}}-\left\{\frac{\sum_{l=m+1}^{n} b_{l}v_{ml}}{v_{mm}}\right\} \right]-  \left\lfloor \frac{\sum_{l=m+1}^{n} b_{l}v_{ml}}{v_{mm}}\right\rfloor, \ \ \ \ m=1, 2, \ldots,n.
\end{eqnarray}
Since the fractional part in the above equation is of the form $s/q_m$, $s \in \{0,1,\ldots,q_m-1\}$, where $q_m$ is defined above, it follows that  $0 \leq s/q_m < 1$. Thus
\begin{eqnarray} \label{decision}
\lefteqn{{b}_m=} \ \ \ \ \ \ \ \ \ \left\{ \begin{array}{cc}
\tilde{b}_m-\left\lfloor \frac{\sum_{l=m+1}^{n}{b}_{l}v_{ml}}{v_{mm}}\right \rfloor, & s\leq s(m),  \\
\tilde{b}_m-\left\lfloor \frac{\sum_{l=m+1}^{n}b_{l}v_{ml}}{v_{mm}}\right \rfloor-1,  & s >  s(m).
\end{array}
\right.
\end{eqnarray}
can be computed in the fusion center $F$ in the order $m=n,n-1,\ldots,1$.
\end{proof}
\begin{corollary}\label{corocost}
The rate required to transmit $s(m)$, $m=1,2,\ldots,n-1$ is no larger than $\sum_{i=1}^{n-1}\log_2(q_i)$ bits.
\end{corollary}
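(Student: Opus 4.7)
The plan is to argue purely by counting the alphabet from which each $s(m)$ is drawn. By the definition preceding Protocol~$\Pi_c$, we have $s(m)\in\{0,1,\ldots,q_m-1\}$, a set of cardinality $q_m$, and in the protocol only the indices $m=1,2,\ldots,n-1$ need to be transmitted (since $s(n)=0$ is fixed by convention and requires no communication).

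Next, I would invoke the standard source-coding observation that any symbol drawn from an alphabet of size $q_m$ can be described with at most $\lceil \log_2 q_m \rceil$ bits using a fixed-length binary representation, and hence with no more than $\log_2 q_m$ bits in the sense of rate (since we consider the large-block/asymptotic regime, or equivalently since $\log_2 q_m$ upper bounds the entropy of any random variable supported on $q_m$ values). No assumption on the joint distribution of the $s(m)$'s is required: even if they are transmitted independently, each contributes at most $\log_2 q_m$ bits.

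Finally, I would sum the per-node contributions: since the transmissions of $s(1),\ldots,s(n-1)$ occupy disjoint portions of the channel, the total rate is bounded by
\begin{equation}
\sum_{m=1}^{n-1} \log_2 q_m,
\end{equation}
which matches the claimed bound (with the index renamed from $m$ to $i$). The argument therefore reduces to identifying the alphabet of $s(m)$ and applying the elementary information-theoretic inequality $H(S)\leq \log_2|\mathrm{supp}(S)|$; there is essentially no obstacle, and the only subtlety is being careful that $s(n)$ contributes nothing so that the sum correctly runs up to $n-1$ rather than $n$.
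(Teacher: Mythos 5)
Your argument is correct and is essentially the paper's own (the corollary is left as an immediate consequence of the protocol): $s(m)$ lives in an alphabet of size $q_m$, $s(n)$ costs nothing since $q_n=1$, and summing the per-node costs $\log_2 q_m$ over $m=1,\dots,n-1$ gives the bound. Your remark that the bound $\log_2 q_m$ rather than $\lceil \log_2 q_m\rceil$ should be read in the rate/entropy sense is a fair and harmless clarification consistent with how the paper treats rates throughout.
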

Thus the total rate for computing the Babai point at the fusion center $F$ under the centralized model is no larger than $\sum_{i=1}^nh(p_i)-\log_2|\det V| -n \log_2(\alpha)+\sum_{i=1}^{n-1}\log_2(q_i)$ bits, where $h(p_i)$ is the differential entropy of random variable $X_i$, and scale factor $\alpha$ is small. Thus the incremental cost due to the $s(m)$'s does not scale with $\alpha$. However when $\alpha$ is small, this incremental cost can be considerable, if the lattice basis is not properly chosen as we will see in further examples.

	This rate computation can be visualized geometrically and under the light of the decoding in orthogonal lattices. Consider a lattice $\Lambda \subset \mathbb{R}^{n}$ generated by $\{v_1, v_{2}, \dots, v_{n}\},$ where we want to decode under the constraints proposed by the centralized model, a real vector $x=(x_1, x_2, \dots, x_{n}).$ We construct an associated orthogonal lattice $\Lambda' \subset \mathbb{R}^{n}$ whose basis vectors are $\{\underbrace{(v_{11}, 0, \dots, 0)}_{v_{1}'}, \dots, \underbrace{(0, 0, \dots, v_{nn})}_{v_{n}'}\},$ where $v_{ii}, 1 \leq i \leq n$ are the diagonal elements from the original generator matrix of $\Lambda.$ Observe that the Voronoi region of $\Lambda'$ corresponds to the Babai partition not aligned achieved without sending any extra bit in this model.
	
	The idea is to decode in the orthogonal associated lattice $\Lambda',$ which is a simple process and after that, recover the original approximate closest lattice point in $\Lambda.$ At the end, we aim to prove that this process is equivalent to sending the extra bits and with this information, decide between the cases described in Equation (\ref{decision}).
	
	Initially, we can notice that these Babai partitions in the space follow a cyclic behavior, i.e., after exactly $\prod_{m=1}^{n-1} q_{m},$ where $q_{m}= l.c.m \ \{q_{ml}, l>m\}$ shifts, it comes back to the original setting. This number, when calculated as a rate, corresponds precisely to the upper bound we have for the extra bits, introduced in Corollary \ref{corocost}. 
	
\begin{example} Consider a lattice $\Lambda$ generated by $\{(1,0),(2/5,2)\}$ and $\Lambda'$ generated by $\{(1,0),(0,2).\}$ In this case, $q_{1}=5$ and there are $q_{1}$ distinct settings in the plane one need to analyze.  After $q_{1}$ shifts, the Voronoi aligned partition around lattice points in the form $(0,\kappa),$ $\kappa \in \mathbb{Z},$ starts to be repeated, as illustrated in Figure \ref{vv}.

\begin{figure}[H]
\begin{center}
		\includegraphics[height=5.5cm]{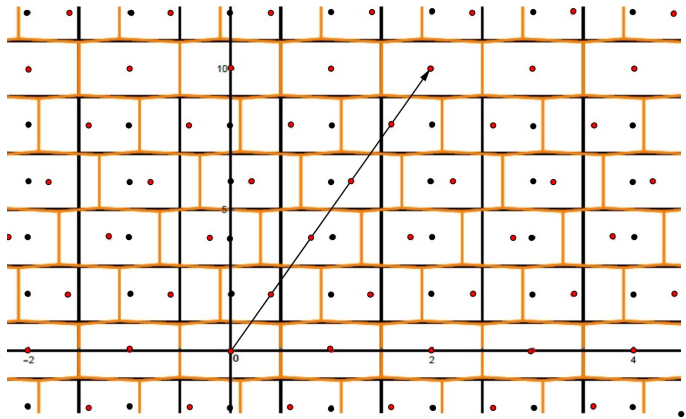}  
\caption{{Voronoi partition of $\Lambda$ in orange and Voronoi partition of $\Lambda'$ (Babai) in black}}
 \label{vv}
\end{center}
\end{figure}

	This situation can be seen as a "modulo $q_{m}$" operation, where each class is represented uniquely in the space.
\end{example}

	The vector $\tilde{b}=(\tilde{b}_{1}, \dots, \tilde{b}_{n}),$ with $\tilde{b}_{i}=\left [ x_{i}/ v_{ii} \right],$ is such that $||x-V'\tilde{b}||$ is minimum, where $V'$ has the vectors $v_{1}', \dots, v_{2}'$ on its columns. It means that $\tilde{b}$ decodes $x \in \mathbb{R}^{n}$ in the associated lattice $\Lambda'$ and we want to use this information to decode approximately in $\Lambda.$ Clearly, $\tilde{b}_{n}=b_{n}$ always.

	In a general two dimensional case, for a matrix in the form $V=\begin{pmatrix}
	v_{11} & v_{12} \\
	0 & v_{22}
	\end{pmatrix},$ we consider $\tilde{b}_{2}=b_{2}.$ Indeed, this fact is always true because essentially, we want to write the vector $(x_{1},x_{2})$ in terms of the basis $\{v_{1},v_{2}\}.$ So, we have that
\begin{eqnarray}
\begin{pmatrix}
	v_{11} & v_{12} \\
	0 & v_{22}
	\end{pmatrix} \begin{pmatrix}
	{b}_{1} \\
	{b}_{2} 
	\end{pmatrix} = \begin{pmatrix}
	x_{1} \\
	x_{2} 
	\end{pmatrix}.
\end{eqnarray}

	To recover the aligned Babai partition, one aims to find:
%
%

{\small \begin{eqnarray} \label{eq2d}
b_{2} &=& \left[ \frac{x_{2}}{v_{22}} \right],\nonumber \\
v_{11}b_{1}+v_{12}b_{2}= x_{1} \Rightarrow b_{1} &=& \left[ \frac{x_{1}}{v_{11}} -\frac{v_{12}}{v_{11}}b_{2} \right] = \left[ \frac{x_{1}}{v_{11}} - \frac{p_{12}}{q_{12}} b_{2} \right] \nonumber \\
&=& \left[ \underbrace{\left[ \frac{x_1}{v_{11}} \right]}_{\tilde{b}_{1}} + \overline{x}_{1} - \left\{ \frac{p_{12}}{q_{12}}b_{2} \right\} \right]  - \underbrace{\left\lfloor \frac{p_{12}}{q_{12}}b_{2} \right\rfloor}_{(b_{2}p_{12} \mod q_{1})},
\end{eqnarray}}
where $-\frac{1}{2} < \overline{x}_{1} < \frac{1}{2}.$ Geometrically, this operation means that we are bringing the analysis in each case to one of the $\{0,1, \dots, q_{1}-1\}$ classes and correcting it by a factor of $(b_{2}p_{12} \mod q_{1}),$ which represents the translation occurred to the lattice point.


\begin{example} Figure \ref{vv} has represented in black the lattice points of $\Lambda'$ and in red the lattice points of $\Lambda,$ which are the ones we want to recover at the end of the process. We can immediately notice that the correction we need to take in account depends on where $x_{1}$ is located in the plane. For example, if $\tilde{b}_{2}=3=b_{2}$ and $-\frac{1}{2} < x_{1} < \frac{1}{2},$ then
\begin{equation}
b_{1}=\begin{cases}
\tilde{b}_{1}- 1, & \ \text{if} \ -\frac{1}{2}+ \frac{1}{5} = -\frac{3}{10} < x_{1}-[x_{1}] < \frac{1}{2} \\
(\tilde{b}_{1}- 1) -1, & \ \text{if} \ -\frac{1}{2} < x_{1}-[x_{1}] \leq -\frac{3}{10}.  \\
\end{cases}
\end{equation}

	In a more general setting, according to Equation (\ref{eq2d}), we have that:
\begin{equation}
b_{1}=\begin{cases}
\tilde{b}_{1}  - (b_{2}p_{12} \mod q_{1}) , & \ \text{if} \ -\frac{v_{11}}{2} + \left\{ \frac{p_{12}}{q_{12}b_{2}} \right\} <\frac{x_{m}}{v_{mm}}-  \left[ \frac{x_{m}}{v_{mm}}\right] < \frac{v_{11}}{2} \\
\tilde{b}_{1} - (b_{2}p_{12} \mod q_{1}) - 1, & \ \text{if} \ -\frac{v_{11}}{2} < \frac{x_{m}}{v_{mm}} - \left[ \frac{x_{m}}{v_{mm}} \right]  \leq  -\frac{v_{mm}}{2} + \left\{\frac{p_{12}}{q_{12}b_{2}} \right\}.  \\
\end{cases}
\end{equation}
\end{example}

	This analysis can be also described for the $n-$dimensional case, where we aim to find the Babai point $\overline{b}=(b_{1}, \dots, b_{n}),$ as
{\small \begin{equation}
b_{m}=\begin{cases}

\tilde{b}_{m}  - (\sum_{l=m+1}^{n}{b}_{l}p_{ml}\hat{q}_{ml} \mod q_{m}) , & \ \text{if} \ -\frac{v_{mm}}{2} + \left\{ \frac{\sum_{l=m+1}^{n}{b}_{l}v_{ml}}{v_{mm}} \right\} <\frac{x_{m}}{v_{mm}}-  \left[ \frac{x_{m}}{v_{mm}}\right] < \frac{v_{mm}}{2} \\

\tilde{b}_{m} - (\sum_{l=m+1}^{n}{b}_{l}p_{ml}\hat{q}_{ml} \mod q_{m}) - 1, & \ \text{if} \ -\frac{v_{mm}}{2} < \frac{x_{m}}{v_{mm}} - \left[ \frac{x_{m}}{v_{mm}} \right]  \leq  -\frac{v_{mm}}{2} + \left\{\frac{\sum_{l=m+1}^{n}{b}_{l}v_{ml}}{v_{mm}} \right\}, \\
\end{cases}
\end{equation}}
where $\hat{q}_{ml}={q_{m}}/{q_{ml}}.$ Therefore, the cost of analyzing all the classes is no larger than  $\sum_{m=1}^{n-1}\log_2(q_m),$ as stated in Corollary \ref{corocost}.

	The following example illustrated how the method proposed in Theorem \ref{thmcost} works in two and three dimensions and also explore a case where this cost could be large.

\begin{example} Consider the hexagonal $A_{2}$ lattice generated by 
$$V=\begin{pmatrix}
1 & \frac{1}{2} \\
0 & \frac{\sqrt{3}}{2}
\end{pmatrix}.$$ 
 The basis vectors are already Minkowski-reduced and applying what we described above we have that the coefficients $b_{2}$ and $b_{1}$ are given respectively by
{\small \begin{equation}
b_{2}=\left[\frac{x_{2}}{v_{22}} \right]=\left[\frac{2}{\sqrt{3}}x_{2} \right]
\end{equation} }
and
{\small \begin{eqnarray}
b_{1}&=&\left[\frac{x_{1}}{v_{11}} - \left\{\frac{b_{2}v_{21}}{v_{11}} \right\} \right]-\left\lfloor\frac{b_{2}v_{21}}{v_{11}} \right\rfloor \\
&=& \left[ x_{1} -  \left\{\left[\frac{2}{\sqrt{3}}x_{2} \right] \frac{1}{2} \right\} \right] - \left\lfloor\left[ \frac{2}{\sqrt{3}}x_{2} \right] \frac{1}{2} \right\rfloor.
\end{eqnarray} }

	Hence, for any real vector $x=(x_1, x_2)$ we have $\left\{ \left[ \frac{2}{\sqrt{3}}x_{2} \right] \frac{1}{2} \right\}= \frac{s}{q}$, with $q=2$ and $s \in \{0,1\}$.  Node one must then send the largest integer $s(1)$ in the range $\{0,1\}$ for which $\left[x_{1}-\frac{s(1)}{q_{1}}\right]=[x_{1}]$ and $s(1)=0$ or $s(1)=1$ depending on the value that $x_{1}$ assumes.
	
	The cost of this procedure, according to Corollary \ref{corocost}, is no larger than $\log_{2}q_{1}=1$ bit. Thus the cost of constructing the nearest plane partition for the  hexagonal lattice is at most one bit.

\end{example}

	Nevertheless, this rate could be potentially large as the next example illustrates.

\begin{example} Suppose a lattice generated by 
$$V=\begin{pmatrix}
1 & \frac{311}{1000} \\
0 & \frac{101}{100}
\end{pmatrix}.$$ 
One can notice that the basis vectors are already Minkowski-reduced. Using the theory developed above we have that
{\small \begin{equation}
b_{2}=\left[\frac{x_{2}}{v_{22}} \right]=\left[\frac{100}{101}x_{2} \right]
\end{equation} }
and
{\small \begin{eqnarray}
b_{1}&=&\left[\frac{x_{1}}{v_{11}} - \left\{\frac{b_{2}v_{21}}{v_{11}} \right\} \right]-\left\lfloor\frac{b_{2}v_{21}}{v_{11}} \right\rfloor \\
&=& \left[ x_{1} -  \left\{\left[\frac{100}{101}x_{2} \right] \frac{311}{1000} \right\} \right] - \left\lfloor\left[ \frac{100}{101}x_{2} \right] \frac{311}{1000} \right\rfloor.
\end{eqnarray} }

	Consider, for example, $x=(1, 1)$ then we have that $\left\{\left[\frac{100}{101}x_{2} \right] \frac{311}{1000} \right\}=\frac{311}{1000}=\frac{s}{q}.$ In this purpose, node one sends the largest integer $s(1)$ in the range $\{0,1, \dots, 999\}$ for which $\left[x_{1}-\frac{s(1)}{q_{1}}\right]=[x_{1}]$ and we get $s(1)=500.$ 
	
	This procedure will cost no larger than $\log_{2}q_{1}=\log_{2}1000 \approx 9.96$ and in the worst case, we need to send almost $10$ bits to achieve Babai partition in the centralized model. 

\end{example}

\begin{example} Consider the three dimensional body centered cubic (BCC) lattice with generator matrix given by
\begin{equation}
V=\begin{pmatrix}
1 & -\frac{1}{3} & -\frac{1}{3} \\
0 & \frac{2\sqrt{2}}{3} & -\frac{\sqrt{2}}{3} \\
0 & 0 & \sqrt{\frac{2}{3}}
\end{pmatrix}.
\end{equation}
We already checked that this basis generates an obtuse superbase, which is also Minkowski-reduced according to Theorem \ref{minkos}. Thus, in order to align this Voronoi region with Babai partition in its best way, we need to calculate the Babai point given by $(b_1, b_2, b_3)$ described below.

{\small \begin{eqnarray} \label{b3}
b_{3} &=& \left[ \sqrt{\frac{3}{2}} x_{3} \right],
\end{eqnarray}}
{\small \begin{eqnarray} \label{b2}
b_{2}&=&\left[ \frac{3}{2\sqrt{2}}x_{2} + \left\{ \frac{1}{2} b_{3} \right\} \right]+\left\lfloor\frac{1}{2} b_{3} \right\rfloor \nonumber \\
&=& \left[ \frac{3}{2\sqrt{2}}x_{2} + \left\{\frac{1}{2}\left[ \sqrt{\frac{3}{2}} x_{3} \right] \right\} \right] +\left\lfloor\frac{1}{2} \left[ \sqrt{\frac{3}{2}} x_{3} \right] \right\rfloor, 
\end{eqnarray}}
and
{\small \begin{eqnarray}
b_{1}&=&\left[ x_1 + \left\{ \frac{1}{3} b_{2} + \frac{1}{3} b_{3} \right\} \right]+\left\lfloor\frac{1}{3} b_{2} + \frac{1}{3} b_{3} \right\rfloor, 
\end{eqnarray}}
where $b_{2}$ and $b_{3}$ are integers previously defined in Equations (\ref{b3}) and (\ref{b2}), respectively. 

	Hence, for any real vector $x=(x_1, x_2,x_3)$ we have two nodes that should send extra information, nodes $2$ and $1,$ according to the following description
\begin{equation}
\begin{cases}
\text{Node 2:} \ \left\{ \frac{1}{2} b_{3} \right\} = \frac{s(2)}{q(2)}, \ q(2)=2 \ \text{then} \ s(2)=0 \ \text{or} \ 1 \\
\text{Node 1:} \ \left\{ \frac{1}{3} b_{2} + \frac{1}{3} b_{3} \right\} = \frac{s(1)}{q(1)}, \ q(1)=3 \ \text{then} \ s(1)=0,1 \ \text{or} \ 2 .\\
\end{cases}
\end{equation}
	
	Observe that the values of $s(1)$ and $s(2)$ are calculated here in a general way, however, they exact values depend on $x_{1}$ and $x_{2},$ respectively. Therefore, the total rate to send $s(1)$ and $s(2)$ to the fusion center is 
\begin{eqnarray}
\log_{2}2 + \log_{2}3 \approx 2.5859 \approx 3 \ \text{bits}.
\end{eqnarray}
\end{example}

%
%

The analysis here points to the importance of the number-theoretic structure of the generator matrix  $V$ in determining the communication requirements for computing $x_{np}$. \\

\subsection{Interactive Model}
For $i=n,n-1,\ldots,1$, node  $S_i$ sends $U_i=\left[ (X_i-\sum_{j=i+1}^n\alpha v_{ij}U_j)/\alpha v_{ii}\right]$ to all other nodes. The total number of bits communicated is given by $R=(n-1)\sum_{i=1}^n H(U_i|U_{i+1},U_{i+2},\ldots,U_n)$. For $\alpha$ suitably small, and under the assumption of independent $X_i$, this rate can be approximated by $R=(n-1)\sum_{i=1}^n h(p_i)-\log_2(\alpha v_{ii})$. Normalizing so that $V$ has unit determinant we get $R=(n-1)\sum_{i=1}^nh(p_i)-n(n-1) \log_2(\alpha)$.

\section{Conclusion and future work}	
\label{secCC}

	We have investigated the closest lattice point problem in a distributed network, under two communication models, centralized and interactive. By exploring the nearest plane (Babai) partition for a given Minkowski-reduced basis, we have determined a closed form for the error probability in two dimensions. For the three dimensional case, using an obtuse superbase, we have estimated computationally for random lattices the worst error probability. The number of bits that nodes need to send in both models (centralized and interactive) to achieve the rectangular nearest plane partition was computed.	
	
	Further problems to be investigated are regarding similar results to be derived for greater dimensions, for example, it may be possible to generalize the results presented here to families $A_{n} $ and $D_{n}$ lattices, for which reduced form bases are already available. Another direction is to analyse the rate computation for the centralized and interactive modes when the standard Viterbi algorithm based on orthogonal sublattices is considered.
	
%
%
	
\section{Acknowledgment}
CNPq (140797/2017-3, 312926/2013-8) and FAPESP (2013/25977-7) supported the work of MFB and SIRC. VV was supported by  CUNY-RF and CNPq (PVE 400441/2014-4). MFB would like to thank Nelson G. Brasil for meaninful discussions and contributions regarding to the computational implementation.

\end{document}